\newcommand{\NP}{\mathrm{NP}}
\newcommand{\coNP}{\mathrm{coNP}}
\newcommand{\Z}{\mathbb Z}
\newcommand{\R}{\mathbb R}
\newcommand{\Zinfty}{\mathbb Z^{\pm \infty}}
\newcommand{\Rinfty}{\mathbb R^{\pm \infty}}
\newcommand{\re}{\rightarrow}
\newcommand{\VMin}{V_{\mathrm{Min}}}
\newcommand{\VMax}{V_{\mathrm{Max}}}
\newcommand{\val}{\mathrm{val}}
\newcommand{\MP}{\mathrm{MP}}
\newcommand{\Enp}{\mathrm{En}^+}
\newcommand{\Enm}{\mathrm{En}^-}
\newcommand{\const}{\models}
\newcommand{\game}{\mathcal G}
\newcommand{\summ}{\mathrm{sum}}
\newcommand{\Min}{\mathrm{Min}}
\newcommand{\Max}{\mathrm{Max}}
\newcommand{\ext}{\mathrm{ext}}
\newcommand{\alt}{\mathrm{alt}}
\title{The GKK Algorithm is the Fastest over Simple Mean-Payoff Games}
\author{Pierre Ohlmann}{IRIF, Université de Paris, France \and \url{https://www.irif.fr/~ohlmann/} }{ohlmann@irif.fr}{https://orcid.org/0000-0002-4685-5253}{}
\authorrunning{P. Ohlmann} 
\keywords{Mean-payoff games, symmetric algorithm, GKK algorithm, pseudopolynomial} 
\begin{document}

\maketitle

\begin{abstract}
We study the algorithm of Gurvich, Karzanov and Khachyian (GKK algorithm) when it is ran over mean-payoff games with no simple cycle of weight zero.
We propose a new symmetric analysis, lowering the $O(n^2N)$ upper-bound of Pisaruk on the number of iterations down to $N + E^+ + E^- + 1 \leq nN +1$, where $n$ is the number of vertices, $N$ is the largest absolute value of a weight, and $E^+$ and $E^-$ are respectively the largest finite energy and dual-energy values of the game.
Since each iteration is computed in $O(m)$, this improves on the state of the art pseudopolynomial $O(mnN)$ runtime bound of Brim, Chaloupka, Doyen, Gentilini and Raskin, by taking into account the structure of the game graph.
We complement our result by showing that the analysis of Dorfman, Kaplan and Zwick also applies to the GKK algorithm, which is thus also subject to the state of the art combinatorial runtime bound of $O(m2^{n/2})$.
\end{abstract}

\newpage

\section{Introduction}
\label{sec:introduction}

\paragraph*{Mean-payoff and energy games}

In the games under study, two players, Min and Max, take turns in moving a token over a sinkless finite directed graph whose edges are labelled by (potentially negative) integers, interpreted as payoffs from Min to Max.
In a mean-payoff game, the players aim to optimise the average payoff in the long run.
When playing an energy game, Min and Max optimise the profile upper-bound which takes values in $[0,+\infty]$; in a dual-energy game, the profile lower-bound in $[0,-\infty]$ comes under scrutiny.

These three games are determined~\cite{Martin75}: for each initial vertex $v$, there is a value $x$ such that starting from $v$, the minimiser can ensure an outcome $\leq x$ whereas the maximiser can ensure a least $x$.
They are moreover positionaly determined~\cite{EM79, BFLMS08} which means that the players can achieve the optimal value even when restricted to strategies with no memory.
We refer to Figure~\ref{fig:mean_payoff2} for a complete example.

\begin{figure}[h]
\begin{center}
\includegraphics[width=0.7\linewidth]{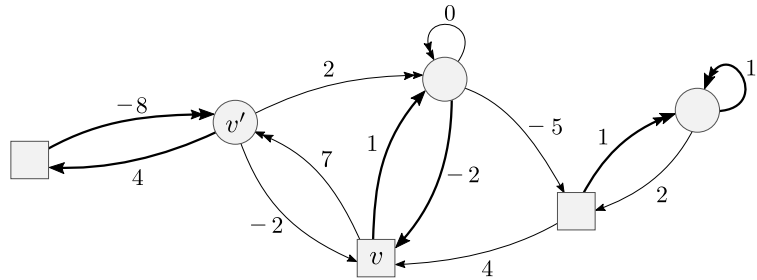}
\end{center}
\caption[caption]{Example of a game; circles and squares represent vertices which respectively belong to Min and Max.
Mean-payoff values from left to right are $-2,-2,-\frac 1 2, - \frac 1 2,1$ and $1$, and mean-payoff-optimal positional strategies for both players are identified in bold.
Energy values are $0,2,9,0,\infty$ and $\infty$, and energy-optimal strategies are given by arrows with double heads.
Dual energy values are $-\infty,-\infty,-\infty,-\infty,0$ and $0$; the bold strategy also gives an optimal strategy in the dual-energy game.
}
\label{fig:mean_payoff2}
\end{figure}

In this paper, we are interested in solving the threshold problem for mean-payoff games: given a game and an initial vertex, decide whether its value is $\leq 0$.
As a consequence of positional determinacy, the mean-payoff value of a vertex is non-positive if and only if the energy value is finite~\cite{BCDGR11}.
In fact, all state of the art algorithms~\cite{BCDGR11,DKZ19,BV05} for the threshold problem -- further discussed below -- actually go through computing the energy values.
The best algorithms for the more general problems of computing the exact values or synthesising optimal strategies in the mean-payoff game also rely on solving many auxiliary energy games~\cite{CR17}.

Positional strategies achieving positive or non-positive values can be checked in polynomial time, and therefore the problem belongs to $\NP \cap \coNP$.
Despite numerous efforts, no polynomial algorithm is known.
Mean-payoff games are known~\cite{Puri95} to be more general than parity games~\cite{EJ91, Mostowski:1991} which enjoy a similar complexity status but were recently shown to be solvable in quasipolynomial time~\cite{CJKLS17}.
It is however unlikely that algorithms for solving parity games in polynomial time generalise to mean-payoff games~\cite{FGO18}.

We use $n$ for the number of vertices, $m \geq n$ for the number of edges, and $N$ for the maximal absolute value of a weight.
We will say that a runtime bound (or an algorithm) is combinatorial if it does not depend on $N$, and that it is pseudopolynomial if it is polynomial in $n$ and $N$.

Although such a terminology was not introduced at that time, the first algorithm for solving energy games is due to Gurvich, Karzanov and Khachyian~\cite{GKK88}.
They used such an algorithm, which we will call the GKK algorithm, as a subroutine in a dichotomy for computing the values in the mean-payoff game.
The GKK algorithm is based on iterating \emph{potential transformations}, each of which require $O(m)$ operations.
From their proof of termination, one can immediately extract an upper bound of $O(n2^n)$ on the number of iterations, which is easily improved to $O(2^n)$ with a slightly refined analysis.
The results of Pisaruk~\cite{Pisaruk99} in a more general setting imply a pseudopolynomial bound of $O(n^2 N)$ on the number of iterations of the GKK algorithm, aligning its worst case runtime bound with that of Zwick and Paterson~\cite{ZP96}.

The current state-of-the-art combinatorial algorithm is the  strategy improvement algorithm of Bjorklund and Vorobiov~\cite{BV05} with randomised runtime $\min(O(mn^2N),2^{O(\sqrt{n \log n})})$.
The pseudopolynomial bound was later improved by Brim, Chaloupka, Doyen, Gentilini and Raskin~\cite{BCDGR11} by reduction to energy games to $O(mnN)$ with a deterministic value iteration algorithm.
This technique was recently refined by Dorfman, Kaplan and Zwick~\cite{DKZ19} who proposed an acceleration of the algorithm which runs in time $O(\min(mnN,m2^{n/2}))$.
Currently, this is the best known deterministic algorithm for the threshold problem, both in terms of combinatorial and pseudopolynomial bounds; in particular, no deterministic subexponential algorithm is known to this day.

\paragraph*{Our contribution}

We propose to analyse the GKK algorithm when it is ran over a \emph{simple} mean-payoff game, meaning, one which has no simple cycle of weight zero.
Simple mean-payoff games arise directly when translating from parity games; moreover one can reduce in general to a simple game with a multiplicative blow-up of $n$ on the largest weight $N$.
We give a completely symmetric presentation of the GKK algorithm in this case, and a novel symmetric analysis based on energy and dual-energy values.

Our main result is a novel bound of $N + E^+ + E^- +1$ on the number of iterations of the GKK algorithm over simple games, where $E^+$ and $E^-$ are respectively the maximal finite energy and dual-energy values of a vertex.
This quantity is always smaller than $nN+1$, and therefore the GKK algorithm is at least as efficient in this case as the state of the art value iteration algorithms~\cite{BCDGR11,DKZ19}.

In practice however, $N + E^+ + E^-$ may be much smaller than $nN$; for instance in the game of Figure~\ref{fig:mean_payoff2}, we have $N+E^++E^-=8+9+0=17$ whereas $nN=48$. It is very easy to forge examples where the difference is much higher; we believe that for many natural classes of games it holds that $N+E^+ + E^- = o(nN)$.
Moreover, the value iteration algorithms rely on using $nN$ as a threshold beyond which energy values are considered to be infinite, and therefore they often display runtime $\Omega(nN)$ when there are vertices with positive mean-payoff value.
Our result indicates that the GKK algorithm avoids this drawback, all the while retaining (and often improving, as explained above) the state of the art pseudopolynomial runtime bound, at least for simple games.

We complement our main bound by showing that the analysis of~\cite{DKZ19} can also be applied to the GKK algorithm, establishing a combinatorial $O(2^{n/2})$ bound on the number of iterations.
Hence the GKK algorithm also matches the combinatorial state of the art for deterministic algorithms (here, the fact that simple arenas are used is not a restriction, since the reduction only blows up the size of the weights).
We also believe that the analysis of Dorfman, Kaplan and Zwick is conceptually simpler (and completely symmetric) when instantiated to the GKK algorithm.

In Section~\ref{sec:preliminaries}, we formally introduce the necessary definitions and concepts.
Section~\ref{sec:gkk_algorithm} presents the GKK algorithm over simple games, and Section~\ref{sec:pseudopolynomial} provides the novel pseudopolynomial bound.
In Section~\ref{sec:combinatorial_bound}, the combinatorial bound is derived.

\section{Preliminaries}\label{sec:preliminaries}

In this preliminary section, we introduce mean-payoff and energy games, potential reductions, and discuss simple games.

\paragraph*{Mean-payoff and energy games}

In this paper, a game is a tuple $\game=(G,w,\VMin, \VMax)$, where $G=(V,E)$ is a finite directed graph with no sink, $w:V \to \Z$ is a labelling of its edges by integer weights, and $\VMin,\VMax$ is a partition of $V$.
As in the introduction, we use $n,m$ and $N$ respectively for $|V|,|E|$ and $\max_{e} |w(e)|$; we say that vertices in $\VMin$ belong to Min while those in $\VMax$ belong to Max.
We now fix a game $\game = (G,w,\VMin,\VMax)$.

A path is a (possibly empty, possibly infinite) sequence of edges $\pi = e_0e_1 \dots$ with matching endpoints: if $e_{i+1}=v_{i+1}v_{i+2}$ is defined then its first component $v_{i+1}$ matches the second component of $e_i$.
For convenience, we often write $v_0 \re v_1 \re v_2 \re \dots$ for the path $e_0e_1 \dots = (v_0v_1)(v_1v_2) \dots$.
Given a finite or infinite path $\pi=e_0 e_1 \dots$ we let $w(\pi)=w(e_0)w(e_1) \dots$ denote the sequence of weights appearing on $\pi$.
The sum of a finite path $\pi$ is the sum of the weights appearing on it, we denote it by $\summ(\pi)$.

Given a finite or infinite path $\pi=e_0e_1 \dots = v_0 \re v_1 \re \dots$ and an integer $k \geq 0$, we let $\pi_{< k} = e_0 e_1 \dots e_{k-1} = v_0 \re v_1 \re \dots \re v_k$, and we let $\pi_{\leq k} = \pi_{< k+1}$.
Note that $\pi_{<0}$ is the empty path, and that $\pi_{<k}$ has length $k$ in general: it belongs to $E^k$.
We say that $\pi$ starts in $v_0$, and when it is finite and of length $k$ that it ends in $v_{k}$.
By convention, the empty path starts and ends in all vertices.
A cycle is a finite path which starts and ends in the same vertex.
A finite path $v_0 \re v_1 \re \dots \re v_k$ is simple if there is no repetition in $v_0,v_1,\dots,v_{k-1}$; note that a cycle may be simple.
We let $\Pi_v^\omega$ denote the set of infinite paths starting in $v$.

We use $\Rinfty$ and $\Zinfty$ to denote respectively $\R \cup \{-\infty, +\infty\}$ and $\Z \cup \{-\infty, \infty\}$.
A valuation is a map $\val: \Z^\omega \to \Rinfty$ which assigns a potentially infinite real number to infinite sequences of weights.
The three valuations which are studied in this paper are the mean-payoff, energy, and dual-energy valuations, respectively given by
\[
\begin{array}{lclcl}
\MP(w_0w_1 \dots) &=& \limsup_k \frac 1 k \sum_{i=0}^{k-1} w_i & \in & \R \\
\Enp(w_0w_1 \dots) &=& \sup_k \sum_{i=0}^{k-1} w_i & \in & [0,\infty] \\
\Enm(w_0w_1 \dots) &=& \inf_k \sum_{i=0}^{k-1} w_i & \in & [-\infty,0].
\end{array}
\]
A strategy for Min is a map $\sigma: \VMin \to E$ such that for all $v \in \VMin$, it holds that $\sigma(v)$ is an edge outgoing from $v$.
We say that a (finite or infinite) path $\pi = e_0e_1 \dots = v_0 \re v_1 \re \dots$ is consistent with $\sigma$ if whenever $e_i=v_iv_{i+1}$ is defined and such that $v_i \in \VMin$, it holds that $e_i = \sigma(v_i)$.
We write in this case $\pi \const \sigma$.
Strategies for Max are defined similarly and written $\tau : \VMax \to E$.
Paths consistent with Max strategies are defined analogously and also denoted by $\pi \const \tau$.

\begin{theorem}[\cite{EM79,BFLMS08}]\label{thm:positionality}
For each $\val \in \{\MP,\Enp,\Enm\}$, there exist strategies $\sigma_0$ for Min and $\tau_0$ for Max such that for all $v \in V$ we have
\[
\sup_{\pi \const \sigma_0} \val(w(\pi)) = \inf_{\sigma} \sup_{\pi \const \sigma} \val(w(\pi)) = \sup_{\tau} \inf_{\pi \const \tau} \val(w(\pi)) = \inf_{\pi \const \tau_0} \val(w(\pi)),
\]
where $\sigma, \tau$ and $\pi$ respectively range over strategies for Min, strategies for Max, and infinite paths from $v$.
\end{theorem}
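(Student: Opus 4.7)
The plan is to establish Theorem~\ref{thm:positionality} by an Ehrenfeucht--Mycielski / Liggett--Lippman style induction that applies uniformly to the three valuations $\MP$, $\Enp$, $\Enm$. Determinacy (the inner two equalities) is classical: each of the three valuations is a Borel-measurable function $\Z^\omega \to \Rinfty$, so Martin's theorem on Borel determinacy yields the sup--inf equals inf--sup equality from every starting vertex. The substantive content is the \emph{uniform positional} part: a single positional strategy per player that realises the value from every initial vertex simultaneously.

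I would proceed by induction on the outdegree excess $c(\game) = \sum_{v \in V}(|E_v| - 1)$, where $E_v$ denotes the set of outgoing edges at $v$. When $c(\game) = 0$ every vertex has a unique outgoing edge, each player has a unique (positional) strategy and the conclusion is immediate. For the induction step, fix a vertex $v_0$ with $|E_{v_0}| \geq 2$; without loss of generality assume $v_0 \in \VMin$. Partition $E_{v_0} = A \sqcup B$ into two nonempty subsets; this yields sub-games $\game_A, \game_B$ of strictly smaller $c$. By the induction hypothesis, each has a uniform value map ($\val_A, \val_B$) and uniformly optimal positional strategies for both players.

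The central combinatorial claim is that $\val$ coincides with $\val_A$ or with $\val_B$ as a function $V \to \Rinfty$. Once shown, the positional optima in the matching sub-game lift immediately to $\game$, closing the induction. One direction, $\val(u) \leq \min(\val_A(u), \val_B(u))$ for every $u$, is immediate: Min can commit at the outset to playing only edges in $A$ (or only in $B$) at $v_0$, so the full game degenerates into the relevant sub-game. The converse is the main obstacle: one combines Max's positional optima $\tau_A, \tau_B$ into a single positional Max strategy in $\game$ and argues that any consistent play either visits $v_0$ only finitely often -- in which case a tail argument places the play eventually inside one of $\game_A$ or $\game_B$ -- or visits $v_0$ infinitely often but by pigeonhole uses edges of one of $A$ or $B$ cofinally, again placing an infinite sub-play inside one sub-game.

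The remaining difficulty is specialising this tail argument to each valuation. For $\MP$, shift-invariance of $\limsup$-averages makes the finite prefix irrelevant and reduces to an eventually-contained tail; for $\Enp$ and $\Enm$, the excursions through $v_0$ have to be controlled more carefully, but the supremum (respectively, infimum) of prefix sums over the full play can be bounded in terms of the corresponding quantity in the sub-game plus a uniformly bounded correction reflecting the finitely many detours through the excluded edges. These cases are classical and are carried out respectively in~\cite{EM79} and~\cite{BFLMS08}.
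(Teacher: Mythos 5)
A preliminary remark on scope: the paper does not prove Theorem~\ref{thm:positionality} at all --- it is imported as a known result with the citations \cite{EM79,BFLMS08}, and nothing in the body or appendices argues it. So there is no in-paper proof to compare yours against; your sketch has to stand on its own. The route you outline (Borel determinacy via threshold games for the value, plus an edge-induction in the style of Gimbert--Zielonka for uniform positionality) is a legitimate classical strategy, and the easy direction $\val(u) \leq \min(\val_A(u),\val_B(u))$ is fine.

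The genuine gap sits exactly where the difficulty of that route lies, namely the sentence ``one combines Max's positional optima $\tau_A,\tau_B$ into a single positional Max strategy in $\game$''. Since the split vertex $v_0$ belongs to Min, both $\tau_A$ and $\tau_B$ are maps $\VMax \to E$ on the \emph{same} domain and may disagree at Max vertices; there is no generic merge of two positional strategies into one that behaves like $\tau_A$ on the $A$-excursions and like $\tau_B$ on the $B$-excursions of a play, yet your pigeonhole/decomposition argument silently assumes the sub-plays landing in $\game_A$ (resp.\ $\game_B$) are consistent with an \emph{optimal} strategy of that subgame. The classical fix is different: Max commits to a single $\tau_X$, one decomposes any consistent play into a shuffle of plays each consistent with $\tau_X$ inside one subgame, and one invokes a submixing/selectivity property of the valuation --- and establishing that property for $\Enp$ and $\Enm$ is not the ``uniformly bounded correction'' you describe, since the number of detours through $v_0$ is unbounded and $\sup_k$ of prefix sums does not decompose additively along the shuffle. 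Relatedly, your central claim that $\val$ coincides \emph{globally} with $\val_A$ or $\val_B$ is derived from a comparison at $v_0$ only via prefix-independence, which $\Enp$ and $\Enm$ lack (compare Lemma~\ref{lem:optimal_strat}, which is exactly the kind of bookkeeping needed). Finally, deferring these cases to \cite{EM79} and \cite{BFLMS08} does not close the gap, as those papers establish positionality by different means (e.g.\ via the finite cycle-forming game) rather than by carrying out the tail argument your induction requires.
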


The quantity defined by the equilibrium above is called the value of $v$ in the $\val$ game, and we denote it by $\val_{\game}(v) \in \Rinfty$; the strategies $\sigma_0$ and $\tau_0$ are called $\val$-optimal, note that they do not depend on $v$.
The following result relates the values in the mean-payoff and energy games; this direct consequence of Theorem~\ref{thm:positionality} was first stated in~\cite{BCDGR11}.

\begin{corollary}[\cite{BCDGR11}]\label{cor:mean_payoffs_and_energies}
For all $v \in V$ it holds that
\[
\MP_\game(v) \leq 0 \iff \Enp_\game(v) < \infty \iff \Enp_\game(v) \leq (n-1)N,
\]
and likewise,
\[
\MP_\game(v) \geq 0 \iff \Enm_\game(v) > -\infty \iff \Enm_\game(v) \geq -(n-1)N.
\]
\end{corollary}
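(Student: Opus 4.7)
The plan is to prove the first equivalence chain; the second follows by a fully symmetric argument (or by negating all weights, which swaps the roles of Min and Max and exchanges $\Enp$ with $\Enm$). The implication $\Enp_\game(v) \leq (n-1)N \Rightarrow \Enp_\game(v) < \infty$ is immediate, so only two substantive directions remain.

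For $\Enp_\game(v) < \infty \Rightarrow \MP_\game(v) \leq 0$, I would invoke Theorem~\ref{thm:positionality} to fix a positional $\Enp$-optimal strategy $\sigma_0$ for Min. Against $\sigma_0$, every prefix sum along every infinite play from $v$ is bounded above by the finite quantity $\Enp_\game(v)$. Dividing by the prefix length and taking $\limsup$ then yields $\MP(w(\pi)) \leq 0$ for every $\pi \const \sigma_0$ starting at $v$, so the same $\sigma_0$ witnesses $\MP_\game(v) \leq 0$.

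The crux is the other direction $\MP_\game(v) \leq 0 \Rightarrow \Enp_\game(v) \leq (n-1)N$. I would fix, again via Theorem~\ref{thm:positionality}, a positional $\MP$-optimal strategy $\sigma_0$ for Min and work in the subgraph $G^{\sigma_0}$ in which Min's choices are frozen. The claim is that every finite path $\rho$ from $v$ in $G^{\sigma_0}$ satisfies $\summ(\rho) \leq (n-1)N$; taking $\sup_k$ over prefixes of an arbitrary infinite play consistent with $\sigma_0$ then delivers the desired bound on $\Enp_\game(v)$. Supposing $\summ(\rho) > (n-1)N$, a standard cycle extraction decomposes $\rho$ into a simple path $\rho'$ of length at most $n-1$ (hence $\summ(\rho') \leq (n-1)N$) together with simple cycles $C_1, \dots, C_r$ lying in $G^{\sigma_0}$ and reachable from $v$ there. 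The discarded cycles therefore carry strictly positive total weight, so some $C_j$ has positive sum. Max can then construct an infinite play consistent with $\sigma_0$ that first follows $\rho$ until reaching $C_j$ and then loops on $C_j$ forever; this play has $\MP > 0$, contradicting the $\MP$-optimality of $\sigma_0$.

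The main subtlety I anticipate is the bookkeeping around the reach-and-loop construction. One has to verify that looping on $C_j$ stays consistent with $\sigma_0$ at Min-vertices, which holds automatically because $C_j$ is composed of edges of $G^{\sigma_0}$. It is also worth noting that lower-bounding Max's value only requires exhibiting a single infinite play with positive mean-payoff, so no positional Max strategy needs to be built and potential revisits of Max-vertices between the reach segment and the cycle cause no issue.
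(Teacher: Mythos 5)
Your proof is correct and follows the route the paper intends: the paper gives no proof of its own, presenting the statement as a direct consequence of positional determinacy (Theorem~\ref{thm:positionality}) as in~\cite{BCDGR11}, and your argument --- bounded prefix sums under an $\Enp$-optimal positional strategy for one direction, and cycle extraction in the subgraph frozen by an $\MP$-optimal positional strategy for the other --- is exactly that standard derivation. All the steps check out, including the key inequality $\sum_j \summ(C_j) = \summ(\rho) - \summ(\rho') > 0$ and the consistency of the reach-and-loop play with $\sigma_0$.
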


Therefore computing $\Enp$ values of the games is harder than the threshold problem.
As explained in the introduction, all state-of-the-art algorithms for the threshold problem actually compute $\Enp$ values, and so does the GKK algorithm (in fact, it even computes $\Enm$ values, while algorithms of~\cite{BV05,BCDGR11,DKZ19} do not).
This shifts our focus from mean-payoff to energy games.

\paragraph*{Potential reductions}

Fix a game $\game=(G=(V,E),w,\VMin,\VMax)$.
A potential is a map $\phi:V \to \Z$.
Potentials are partially ordered coordinatewise.
Given an edge $e=vv' \in E$, we define its $\phi$-modified weight to be
\[
w_\phi(e)=w(e) + \phi(v') - \phi(v).
\]
The $\phi$-modified game $\game_\phi$ is simply the game $(G,w_\phi,\VMin,\VMax)$; informally, all weights are replaced by the modified weights.
Note that the underlying graph does not change, in particular paths in $\game$ and $\game_\phi$ are the same.
Observe that for a finite path $\pi= v_0 \re v_1 \re \dots \re v_k$, its sum in $\game_\phi$ is given by
\[
\summ_\phi(v) = \summ(v) - \phi(v_0) + \phi(v_k).
\]
We let $0$ denote the constant zero potential; note that $\game^0=\game$.
For convenience, we use $\val_{\phi}$ to denote $\val_{\game_{\phi}}$ for $\val \in \{\MP,\Enp,\Enm\}$.
Since $\game$ is always fixed, we thus write $\val_0$ for $\val_\game$.

Moving from $\game$ to $\game_\phi$ for a given potential $\phi$ is called a potential reduction; these were introduced by Gallai~\cite{Gallai58} for studying network related problems such as shortest-paths problems.
In the context of mean-payoff or energy games, they were introduced in~\cite{GKK88} and later rediscovered numerous times.
The result below describes the effect of potential reductions over mean-payoff and energy values.

\begin{theorem}\label{thm:potential_reduction_theorem}
\begin{itemize}
\item For any potential $\phi$ we have $\MP_0 = \MP_\phi$ over $V$.
\item If $\phi$ satisfies $0 \leq \phi \leq \Enp_0$, then it holds that $\Enp_0 = \phi + \Enp_\phi$ over $V$.
\item If $\phi$ satisfies $\Enm_0 \leq \phi \leq 0 $, then it holds that $\Enm_0 = \phi + \Enm_\phi$ over $V$.
\end{itemize}
\end{theorem}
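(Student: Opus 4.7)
The plan is to treat the three items in turn, leveraging the telescoping identity $\summ_\phi(v_0 \re \dots \re v_k) = \summ(v_0 \re \dots \re v_k) - \phi(v_0) + \phi(v_k)$ that the excerpt has already recorded, and, for items (2) and (3), Theorem~\ref{thm:positionality}'s optimal strategies for the original game.

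\smallskip

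\textbf{Item (1):} This is a purely pathwise argument. For any infinite $\pi = v_0 \re v_1 \re \dots$, the telescoping identity gives $\frac{1}{k} \summ_\phi(\pi_{<k}) = \frac{1}{k}\summ(\pi_{<k}) + \frac{\phi(v_k)-\phi(v_0)}{k}$. Since $V$ is finite, $\phi$ is bounded, so the right correction term vanishes as $k \to \infty$, and $\MP(w(\pi)) = \MP(w_\phi(\pi))$. Because the sets of strategies and paths in $\game$ and $\game_\phi$ coincide, the resulting game values are equal.

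\smallskip

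\textbf{Item (2):} I would prove both inequalities $\Enp_\phi(v) \leq \Enp_0(v) - \phi(v)$ and $\Enp_\phi(v) \geq \Enp_0(v) - \phi(v)$ by producing candidate strategies in $\game_\phi$ from the $\Enp_0$-optimal strategies $\sigma_0$ (Min) and $\tau_0$ (Max) given by Theorem~\ref{thm:positionality}. For the upper bound, the key sub-lemma to establish is: for every path $\pi = v_0 \re v_1 \re \dots$ starting at $v$ and consistent with $\sigma_0$, and every $k$, $\summ(\pi_{<k}) + \Enp_0(v_k) \leq \Enp_0(v)$. This is shown by a concatenation argument: any path witnessing energy close to $\Enp_0(v_k)$ from $v_k$ and consistent with $\sigma_0$ can be glued behind $\pi_{<k}$ to produce a $\sigma_0$-consistent path from $v$ of energy at least $\summ(\pi_{<k}) + \Enp_0(v_k) - \varepsilon$ (using $\Enp \geq 0$), which is bounded above by $\Enp_0(v)$. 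Combining this with $\phi \leq \Enp_0$ and the telescoping identity yields $\summ_\phi(\pi_{<k}) \leq \Enp_0(v) - \phi(v)$ for every $k$, hence the upper bound. For the lower bound, I use $\tau_0$: for every $\pi \models \tau_0$ from $v$, $\sup_k \summ(\pi_{<k}) \geq \Enp_0(v)$, and since $\phi \geq 0$ the telescoping identity gives $\sup_k \summ_\phi(\pi_{<k}) \geq \Enp_0(v) - \phi(v)$.

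\smallskip

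\textbf{Item (3):} This is entirely symmetric to (2): swap the roles of $\Enp$ and $\Enm$, of $\sup$ and $\inf$, of $\sigma$ and $\tau$, and replace the hypotheses $0 \leq \phi \leq \Enp_0$ with $\Enm_0 \leq \phi \leq 0$. The analogous concatenation sub-lemma becomes $\summ(\pi_{<k}) + \Enm_0(v_k) \geq \Enm_0(v)$ for $\tau$-optimal paths.

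\smallskip

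The main subtlety, rather than an obstacle, is bookkeeping around infinite values: when $\Enp_0(v) = \infty$ the claim $\Enp_0(v) = \phi(v) + \Enp_\phi(v)$ becomes $\infty = \phi(v) + \infty$, which one must check by verifying that the concatenation argument still produces paths of unbounded energy, so that $\Enp_\phi(v) = \infty$ as well. Similarly for $\Enm_0(v) = -\infty$. Once these edge cases are cleanly handled, the three items follow from the two short computations above.
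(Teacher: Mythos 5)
Your proof is correct and follows essentially the same route as the paper's: item (1) via the vanishing telescoping correction, item (3) by symmetry, and the hard inequality of item (2) via the same concatenation sub-lemma (the paper's Lemma~\ref{lem:optimal_strat}) combined with $\phi \leq \Enp_0$. The only divergence is the lower bound $\Enp_\phi \geq \Enp_0 - \phi$: the paper reapplies the concatenation lemma inside $\game_\phi$ with a Min-optimal strategy $\sigma_\phi$ of the modified game, whereas you use the Max-optimal strategy $\tau_0$ of the original game together with $\phi \geq 0$; your variant is marginally more direct since it never needs the lemma in $\game_\phi$, and both correctly reduce the infinite-value cases to the observation that energy is infinite in $\game$ iff it is in $\game_\phi$.
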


A full proof is given in Appendix~\ref{app:potential_reduction_theorem} for completeness.
The second item is illustrated in Figure~\ref{fig:potential_reduction_theorem}.
We say that a potential $\phi$ is positively safe if it satisfies the hypothesis of the second item, $0 \leq \phi \leq \Enp_0.$

\begin{figure}[h]
\begin{center}
\includegraphics[width=0.6\linewidth]{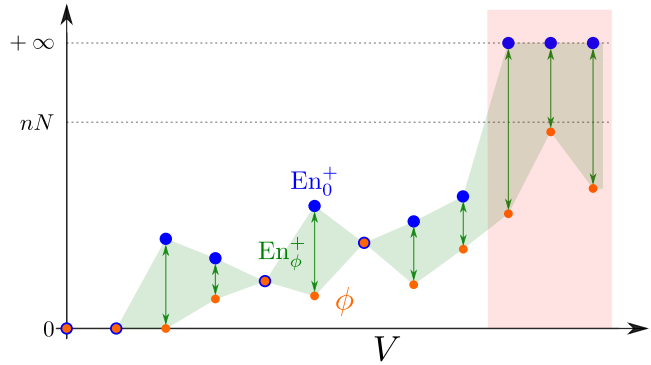}
\end{center}
\caption{An illustration of the second item in Theorem~\ref{thm:potential_reduction_theorem}. For vertices on the right, energy values in both games are $\infty$.}
\label{fig:potential_reduction_theorem}
\end{figure}

Note that potential reductions are invariant under shifts: we have $\game_\phi = \game_{\phi+c}$ if $c$ is a constant potential.
For convenience, we prefer to work with non-negative potentials, even though our approach will be completely symmetric; one could also work with shift-invariant equivalent classes.

To apply the third item in Theorem~\ref{thm:potential_reduction_theorem}, given a potential $\phi$ we define
\[
\phi^- = \phi - \max \phi \leq 0,
\]
and we say that $\phi$ is negatively safe if $\phi^-$ satisfies the hypothesis of the third item, $\Enm_0 \leq \phi^- \leq 0$.
We say that $\phi$ is bi-safe if it is both positively and negatively safe.

Observe that $(\game_\phi)_{\phi'} = \game_{\phi+\phi'}$: sequential applications of potential reductions correspond to reducing with respect to the sum of the potentials.
The following is easily derived as a consequence of Theorem~\ref{thm:potential_reduction_theorem}.

\begin{lemma}\label{lem:safety_composition}
If $\phi$ is positively (or negatively, or bi-) safe for $\game$, and $\phi'$ is positively (or negatively, or bi-) safe for $\game'$, then $\phi + \phi'$ is positively (or negatively, or bi-) safe for $\game$.
\end{lemma}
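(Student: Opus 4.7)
The plan is to prove the three variants separately, each relying on Theorem~\ref{thm:potential_reduction_theorem} together with the composition identity $(\game_\phi)_{\phi'} = \game_{\phi+\phi'}$ recalled just above the statement (reading the $\game'$ in the statement as $\game_\phi$).

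For the positively safe case, I would just unfold definitions. From $0 \leq \phi \leq \Enp_0$ and the second item of Theorem~\ref{thm:potential_reduction_theorem} I obtain $\Enp_\phi = \Enp_0 - \phi$. Combining with the positive safety of $\phi'$ for $\game_\phi$, namely $0 \leq \phi' \leq \Enp_\phi$, immediately yields $0 \leq \phi + \phi' \leq \Enp_0$, as required.

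The negatively safe case is more delicate because of the shift $(\cdot)^-$. Since potential reductions are invariant under constant shifts we have $\game_\phi = \game_{\phi^-}$, so applying the third item of Theorem~\ref{thm:potential_reduction_theorem} to the negatively safe potential $\phi^-$ produces $\Enm_\phi = \Enm_0 - \phi^-$. Negative safety of $\phi'$ for $\game_\phi$ then reads $\Enm_\phi \leq (\phi')^- \leq 0$, which rearranges to
\[
\Enm_0 \leq \phi^- + (\phi')^- = \phi + \phi' - \max\phi - \max\phi'.
\]
The crucial final step is the subadditivity $\max(\phi+\phi') \leq \max\phi + \max\phi'$, which bumps the right-hand side up to $(\phi+\phi')^-$ and yields $\Enm_0 \leq (\phi+\phi')^- \leq 0$ (the upper bound being automatic from the definition of $(\cdot)^-$). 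The bi-safe case is then immediate by combining the two previous ones.

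Overall the argument is essentially a careful unfolding of definitions; the only non-obvious ingredient, and the sole place where the interaction between the sum $\phi + \phi'$ and the normalisation $(\cdot)^-$ is nontrivial, is the subadditivity-of-$\max$ step needed to absorb the shift in the negative case.
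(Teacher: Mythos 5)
Your proof is correct and follows exactly the route the paper intends (it states the lemma as "easily derived" from Theorem~\ref{thm:potential_reduction_theorem} and omits the details). Your treatment of the negative case via subadditivity of $\max$, giving $(\phi+\phi')^- \geq \phi^- + (\phi')^-$, is precisely the one non-trivial point the paper glosses over, and you handle it correctly.
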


\paragraph*{Simple and reduced games}

The lemma above justifies the following approach for computing $\Enp_0$: apply successive positively safe potential reductions $\phi_0,\phi_1,\dots$ until reaching a game whose energy values are only $0$ and $\infty$; then by Theorem~\ref{thm:potential_reduction_theorem} it holds that $\Enp_0 = \phi_0 + \phi_1 + \dots$.
We will present the GKK algorithm as one iterating potential reductions that are actually bi-safe.
For this to hold however, we need to restrict to simple games.

A game is simple if all simple cycles have nonzero sum.
The following result is folklore and states that one may reduce to a simple game at the cost of a linear blow up on $N$.
It holds thanks to the fact that positive mean-payoff values are $\geq 1/n$, which is a well-known consequence of Theorem~\ref{thm:positionality}.

\begin{lemma}\label{lem:lifting_simplicity}
Let $\game=(G,V,\VMin,\VMax)$ be an arbitrary game.
The game $\game'=(G,V,(n+1)w-1,\VMin,\VMax)$ is simple and has the same vertices of positive mean-payoff values as $\game$.
\end{lemma}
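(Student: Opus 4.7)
The plan has two independent parts: showing that $\game'$ contains no simple cycle of weight $0$, and showing that the set of vertices of positive mean-payoff value is preserved.

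For simplicity, I would argue directly on cycle sums. If $C$ is a simple cycle in $G$ of length $k$, then $1 \leq k \leq n$, and its sum in $\game'$ is exactly $(n+1)\summ_\game(C) - k$. For this quantity to vanish one would need $k \equiv 0 \pmod{n+1}$, which is impossible given $1 \leq k \leq n$. Hence every simple cycle has nonzero weight in $\game'$, i.e.\ $\game'$ is simple.

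For the mean-payoff part, the first observation is that scaling all edge weights by $n+1$ and subtracting $1$ transforms mean-payoff values linearly: since the mean-payoff valuation is defined as a limsup of averages, one has
\[
\MP_{\game'}(v) = (n+1)\MP_\game(v) - 1
\]
for every vertex $v$. From this identity, the equivalence $\MP_\game(v) > 0 \iff \MP_{\game'}(v) > 0$ reduces to two implications. If $\MP_\game(v) \leq 0$, then $\MP_{\game'}(v) \leq -1 < 0$ directly. Conversely, assuming $\MP_\game(v) > 0$, I would invoke the folklore fact alluded to in the excerpt: positive mean-payoff values are at least $1/n$. Substituting gives $\MP_{\game'}(v) \geq (n+1)/n - 1 = 1/n > 0$.

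The only step requiring justification beyond routine algebra is the $\geq 1/n$ lower bound on positive mean-payoff values. The main obstacle, though minor, is therefore to point to Theorem~\ref{thm:positionality}: by positional determinacy, Min's optimal value from $v$ is achieved by a positional strategy $\sigma_0$, against which every play reaches a simple cycle after a finite prefix; the mean-payoff is then the average weight of that cycle, a rational number with denominator at most $n$. In particular a strictly positive such rational is at least $1/n$, completing the argument.
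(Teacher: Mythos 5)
Your proof is correct and follows exactly the route the paper intends: the lemma is stated as folklore with no written proof, the only hint being that positive mean-payoff values are at least $1/n$ (a consequence of Theorem~\ref{thm:positionality}), and that is precisely the fact your argument supplies and uses, together with the correct divisibility observation $(n+1)\summ_\game(C)=k$ being impossible for $1\leq k\leq n$. One small point of rigour in your justification of the $1/n$ bound: fixing only Min's positional strategy does not force plays to be eventually periodic (Max may wander non-positionally), so you should either fix positional optimal strategies for \emph{both} players, making the resulting play a lasso whose value is a simple-cycle average, or note that the supremum of limsup-averages over paths in the restricted graph equals the maximum cycle mean.
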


As another direct consequence of Theorem~\ref{thm:positionality}, it holds that in a simple game, mean-payoff values of the vertices are $\neq 0$.
Energy and dual energy values in such a game are depicted in Figure~\ref{fig:simple_energies}.
Moreover, sums of cycles are preserved by potential reductions, and therefore if $\game$ is simple then so is $\game_\phi$, whatever the potential $\phi$.

\begin{figure}[h]
\begin{center}
\includegraphics[width=0.55\linewidth]{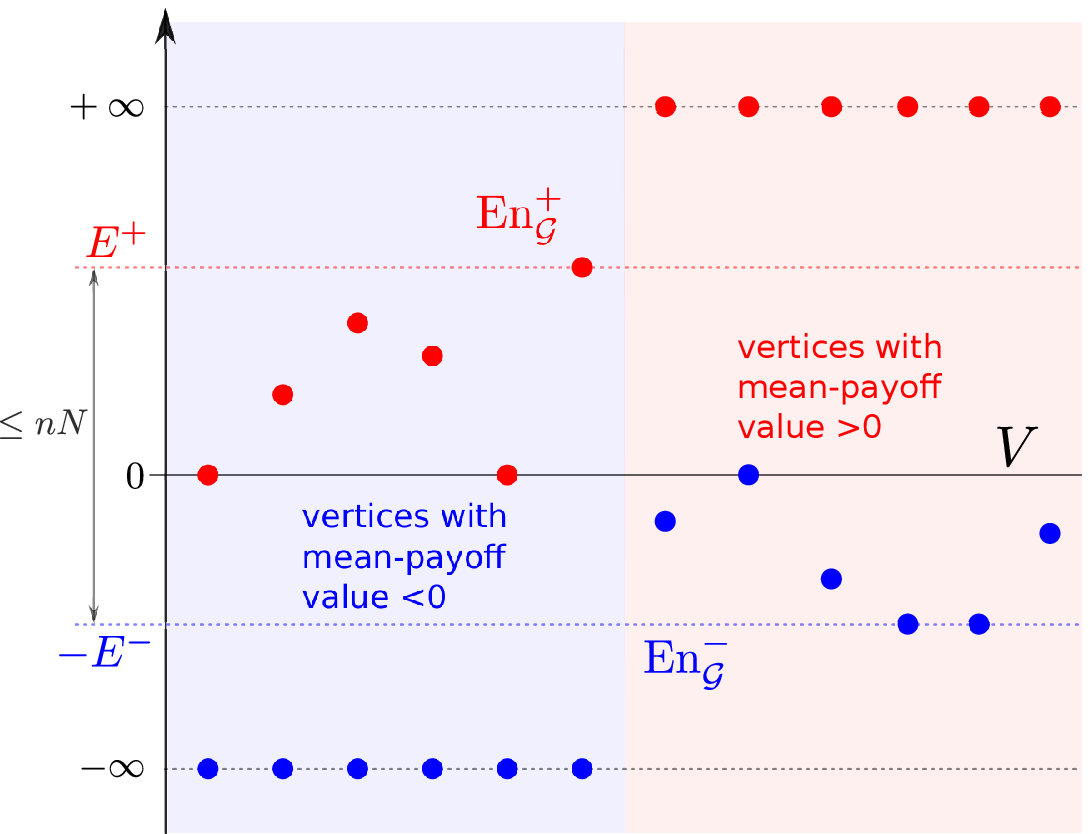}
\end{center}
\caption{Representation of energy and dual energy values when no vertex has mean-payoff value zero; this is always the case for simple arenas.}
\label{fig:simple_energies}
\end{figure}

We say that a simple game is reduced if the vertices are partitioned between $P^*$ and $N^*$ such that
\begin{itemize}
\item vertices in $\VMin \cap N^*$ have a non-positive edge towards $N^*$;
\item all edges outgoing from vertices in $\VMax \cap N^*$ are non-positive and towards $N^*$;
\item vertices in $\VMax \cap P^*$ have a non-negative edge towards $P^*$; and
\item all edges outgoing from vertices in $\VMin \cap P^*$ are non-negative and towards $P^*$.
\end{itemize}
These requirements are illustrated in Figure~\ref{fig:reduced_arena}.

\begin{figure}[h]
\begin{center}
\includegraphics[width=0.55\linewidth]{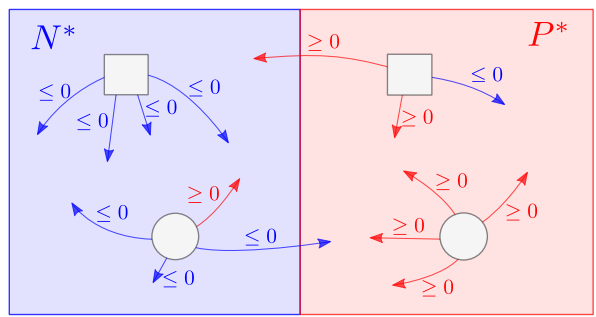}
\end{center}
\caption{A reduced arena. Non-positive edges are represented in blue and non-negative ones in red.}
\label{fig:reduced_arena}
\end{figure}

Intuitively, a reduced game is a simple one in which Min can ensure that no positive edge is ever seen from any vertex of mean-payoff value $<0$, and vice-versa.
We have the following easy result.

\begin{lemma}\label{lem:values_in_reduced_game}
In a reduced game, vertices in $N^*$ have mean-payoff value $<0$ and those in $P^*$ have mean-payoff value $>0$.
Moreover, a simple game is reduced if and only if energy values belong to $\{0,\infty\}$ and dual energy values belong to $\{-\infty,0\}$.
\end{lemma}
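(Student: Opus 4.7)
The plan is to treat the two assertions separately, using the first as a stepping stone for the forward direction of the equivalence. Throughout, I would leverage the four structural requirements defining a reduced game to exhibit concrete strategies for Min and Max, and then invoke Theorem~\ref{thm:positionality} and Corollary~\ref{cor:mean_payoffs_and_energies} to convert these into statements about values.

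For the first assertion, I would define a Min strategy $\sigma$ by picking on each $v \in \VMin \cap N^*$ a non-positive edge towards $N^*$ as provided by the first bullet, and on each $v \in \VMin \cap P^*$ any outgoing edge (automatically non-negative and landing in $P^*$ by the fourth bullet). Starting from $v \in N^*$ and playing $\sigma$, both Min's choices and all of Max's choices from $N^*$ (by the second bullet) keep the token inside $N^*$ along non-positive edges. Hence every infinite play $\pi \const \sigma$ from $v$ satisfies $\MP(w(\pi)) \leq 0$, so $\MP_\game(v) \leq 0$, and since $\game$ is simple the strict inequality $\MP_\game(v) < 0$ follows. The symmetric argument with a Max strategy handles $P^*$.

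For the forward direction of the equivalence, I would reuse the strategy $\sigma$ above: from $v \in N^*$ it enforces that every finite prefix sum along any $\pi \const \sigma$ is non-positive, so $\Enp(w(\pi)) \leq 0$. Combined with the trivial lower bound $\Enp \geq 0$ (the empty prefix has sum zero), Theorem~\ref{thm:positionality} gives $\Enp_\game(v) = 0$. For $\Enm_\game(v)$ on $N^*$, the first assertion together with Corollary~\ref{cor:mean_payoffs_and_energies} yields $\Enm_\game(v) = -\infty$. The case of $P^*$ is symmetric.

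For the converse, given a simple game whose energy values lie in $\{0, \infty\}$ and dual-energy values in $\{-\infty, 0\}$, I would set $N^* = \{v : \Enp_\game(v) = 0\}$ and $P^* = \{v : \Enp_\game(v) = \infty\}$, which partition $V$. Corollary~\ref{cor:mean_payoffs_and_energies} together with simplicity ensures that $N^*$ has $\MP < 0$ and $P^*$ has $\MP > 0$, and consequently that the dual-energies on $N^*$ and $P^*$ are respectively $-\infty$ and $0$. The four structural bullets then follow from the Bellman equations for energy values: for instance, for $v \in \VMin \cap N^*$ one has $\Enp_\game(v) = 0$, so the Min-optimal outgoing edge $e = vv'$ satisfies $\max(0, w(e) + \Enp_\game(v')) = 0$, forcing $w(e) \leq 0$ and $\Enp_\game(v') < \infty$, i.e. $v' \in N^*$; the three remaining bullets are obtained by analogous Bellman-based reasoning, using $\Enp$ for the two $N^*$ bullets and the symmetric $\Enm$-Bellman equation for the two $P^*$ bullets. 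The main subtlety, and the one place requiring care, is phrasing these Bellman characterisations cleanly (in particular for the Max-over-$\VMax$ bullets where one needs the inequality to hold for \emph{every} outgoing edge rather than just one); once that is in place the verification is mechanical.
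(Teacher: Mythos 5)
Your proposal is correct; the paper in fact omits any proof of this lemma (labelling it an ``easy result''), and your argument --- explicit non-positive/non-negative strategies confined to $N^*$ and $P^*$ for the first assertion and the forward implication, combined with simplicity, Corollary~\ref{cor:mean_payoffs_and_energies}, and the optimality (Bellman) equations for $\Enp$ and $\Enm$ for the converse --- is the natural instantiation of what the paper leaves implicit. The one point you rightly flag, quantifying over \emph{all} outgoing edges at the vertices governed by the second and fourth bullets, is exactly where the max/min structure of the optimality equations does the work, and your treatment of it is sound.
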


\section{The GKK algorithm}\label{sec:gkk_algorithm}

Fix a simple game $\game=(G=(V,E),w,\VMin,\VMax)$.
The GKK algorithm iterates bi-safe potential reductions until a reduced arena is obtained.
The runtime for computing each reduction is $O(m)$, therefore the overall runtime is $O(m\ell)$, where $\ell$ is the number of iterations.
In this section we present how the reduction is performed, and prove that it is bi-safe.
Upper bounds on $\ell$ are the focus of Sections~\ref{sec:pseudopolynomial} and~\ref{sec:combinatorial_bound}.

Each iteration relies on a bipartition of the set of vertices, which is completely symmetric thanks to our simplicity assumption.
Observe that since there are no simple cycles of sum zero in $\game$, any infinite path visits a non-zero weight.
The arena is therefore partitioned into the set of vertices $N^*$ from which Eve can ensure that the first visited non-zero weight is negative, and the set of vertices $P^*$ from which Adam can ensure that the first visited non-zero weight is positive.

Note that the partition $N^*,P^*$ depends only on the signs (and zeroness) of the weights, and not on their precise values.
It is computable in linear time; in a standard terminology which is not formally introduced here, $N^*$ is the Min-attractor to negative edges over non-positive edges.
The GKK algorithm is in fact akin to Zielonka's algorithm for parity games~\cite{Zielonka98}: both are based on computing relevant attractors.

\begin{figure}[ht]
\begin{center}
\includegraphics[width=0.9\linewidth]{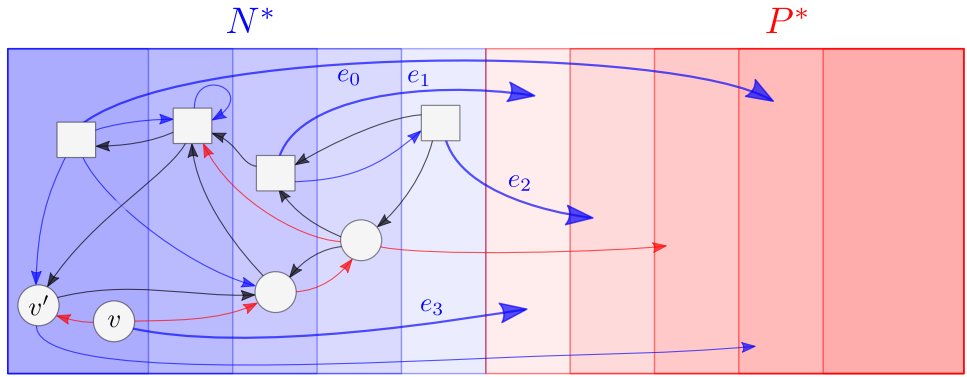}
\end{center}
\caption[caption]{An example of the partition of the vertices into $N^*$ and $P^*$; for clarity, no details are given with respect to $P^*$ where the situation is symmetric.
Blue, black and red arrows respectively represent negative, zero, and positive edges.
The layers depicted in $N^*$ correspond to the Eve-attractor over zero edges to negative ones.\\
With regards to the explanation below: here three edges participate to the maximum defining $\delta_\Max^{-}$ namely $e_0,e_1$ and $e_2$.
Only $e_3$ participates to the maximum defining $\delta_\Min^{-}$; $v'$ has a non-positive edge towards $N^*$ and thus does not belong to $SN$.}
\label{fig:gkk_partition}
\end{figure}

We focus on the point of view of Min, and thus on $N^*$.
By definition, from $N^*$ Min is able to force that a negative edge is seen.
The algorithm computes the worst possible (maximal) negative value that Min can ensure from $N^*$, which we now describe.

Consider a Max vertex $v$ in $N^*$: any edge towards $P^*$ is necessarily negative otherwise $v$ would belong to $P^*$.
Therefore Max may choose to switch to $P^*$, but at the cost of seeing a negative weight.
We let
\[
\delta_{\Max}^- = \max\{w(e) \mid e \in (N^* \cap \VMax) \times P^*\} <0
\]
denote the largest such weight that Max can achieve.
It may be that there is no such edge, in which case we have $\delta_\Max^-=\max \varnothing = -\infty$.

From a Min vertex $v$ in $N^*$ if Min has a non-positive edge towards $N^*$ she can follow this path and avoid to switch to $P^*$.
Otherwise all edges outgoing from $v$ towards $N^*$ are positive, and we let
\[
SN = \{v \in \VMin \cap N^* \mid \forall v' \in N^*, vv' \in E \implies w(vv')>0\}
\]
be the set of Eve vertices in $N^*$ from which she is forced to switch to $P^*$ or see a positive edge.
Note that a vertex $v \in SN$ necessarily has negative outgoing edges, which must therefore point towards $P^*$, otherwise $v$ would not belong to $N^*$.
Therefore we let
\[
\delta_{\Min}^- = \max_{v \in SN} \min\{w(vv') \mid v' \in V\} <0,
\]
and we now put
\[
\delta^- = \max(\delta_\Min^-, \delta_\Max^-) \in [-\infty,0). 
\]
The following result (and the dual one) is crucial for our pseudopolynomial bound.
We prove it now since it refers to the definitions just above.

\begin{lemma}
\label{lem:relevance_deltam}
It holds that $\Enm_0$ takes values $\leq \delta^-$ over $N^*$.
\end{lemma}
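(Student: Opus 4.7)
The plan is to exhibit a positional Min strategy $\sigma$ witnessing $\Enm_0(v) \leq \delta^-$ for every $v \in N^*$; by positional determinacy of the dual-energy valuation (Theorem~\ref{thm:positionality}) this suffices. I would define $\sigma$ on $N^*$ by mirroring the case analysis that produced $\delta^-$: at a Min vertex $v \in (\VMin \cap N^*) \setminus SN$, take any non-positive edge into $N^*$, which exists by the very definition of $SN$; at $v \in SN$, take an outgoing edge of minimum weight, which by the definition of $\delta_\Min^-$ has weight $\leq \delta^-$ and, as noted in the paragraph preceding the lemma, necessarily lands in $P^*$.

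Next I would fix an arbitrary infinite play $\pi = v_0 \re v_1 \re \dots$ starting in $N^*$ and consistent with $\sigma$, and split into two cases. If $\pi$ ever leaves $N^*$, let $k \geq 1$ be the first index with $v_k \in P^*$. All earlier edges stay inside $N^*$ and are non-positive: Min's by construction of $\sigma$, and Max's because a positive edge from a Max vertex in $N^*$ would contradict membership in $N^*$. The crossing edge $v_{k-1}v_k$ has weight $\leq \delta_\Min^-$ or $\leq \delta_\Max^-$ depending on the owner of $v_{k-1}$, hence $\leq \delta^-$. Summing yields $\summ(\pi_{<k}) \leq \delta^-$, so $\Enm(w(\pi)) \leq \delta^-$.

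In the remaining case, $\pi$ stays in $N^*$ forever; every edge of $\pi$ is then non-positive, so partial sums are non-increasing. Here I would invoke simplicity: if only finitely many edges of $\pi$ were nonzero, an infinite suffix would traverse zero-weight edges only, and by the pigeonhole principle would contain a simple zero-sum cycle, contradicting simplicity of $\game$. Hence infinitely many edges are strictly negative, the partial sums diverge to $-\infty$, and $\Enm(w(\pi)) = -\infty \leq \delta^-$. I expect this use of simplicity to rule out indefinite zero-weight wandering inside $N^*$ to be the only mildly delicate step; the rest is just unpacking the definitions of $N^*$, $SN$, $\delta_\Min^-$ and $\delta_\Max^-$.
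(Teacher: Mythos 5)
Your proof is correct and follows essentially the same route as the paper's: the same positional strategy (non-positive edges within $N^*$ off $SN$, minimum-weight edges on $SN$), the same split between plays that cross into $P^*$ and plays that stay in $N^*$, and the same appeal to simplicity in the latter case (which you spell out in slightly more detail than the paper does). Note that you do not even need positional determinacy here — exhibiting one strategy bounding $\sup_{\pi\const\sigma}\Enm(w(\pi))$ already bounds the value, which is an infimum over strategies.
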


\begin{proof}
Consider a positional strategy $\sigma$ for Min which assigns to $v \in (\VMin \cap N^*) \setminus SN$ a non-positive edge towards $N^*$, and to $v \in SN$ an edge of weight $\leq \delta_\Min^-$ (which therefore necessarily leads to $P^*$).
Consider an infinite path $\pi:v_0 \re v_1 \re \dots$ from $v_0 \in N^*$ which is consistent with $\sigma$.

If $\pi$ remains in $N^*$ then all weights are non-positive, and since moreover $\game$ is simple it must be that $\Enm(\pi)=-\infty$.
Otherwise, let $i_0 \geq 0$ be the first index such that $v_{i_0+1} \in P^*$.
If $v_{i_0} \in \VMin$ then necessarily $v_{i_0} \in SN$ and thus $w(v_{i_0}v_{i_0+1}) \leq \delta_\Min^- \leq \delta^-$.
If $v_{i_0} \in \VMax$ then likewise $w(v_{i_0} v_{i_0+1}) \leq \delta_\Max^- \leq \delta^-$.
Since moreover $\pi_{< i_0}$ remains in $N^*$ and is consistent with $\sigma$, it only sees non-positive weights, and therefore $\Enm(\pi) \leq w(v_0v_1) + w(v_1v_2) + \dots + w(v_{i_0}v_{i_0+1}) \leq w(v_{i_0} v_{i_0+1}) \leq \delta$.
\end{proof}

Symmetrically one may define a relevant minimal positive weight for Max from $P^*$ by setting
\[
\delta_{\Min}^+ = \min\{w(e) \mid e \in (P^* \cap \VMin) \times N^*\} \qquad \text{ and } \qquad  
\delta_\Max^+ = \min_{v \in SP} \max\{t \mid v \re t v'\}
\]
where
$
SP = \{v \in \VMax \cap P^* \mid \forall v' \in P^*, vv' \in E \implies w(vv') <0\}$, and then
\[
\delta^+ = \min(\delta_\Min^+,\delta_\Max^+) \in (0,\infty].
\]
The symmetric version of Lemma~\ref{lem:relevance_deltam} states that $\Enp_0$ takes values $\geq \delta^+$ over $P^*$.

We now finally let $\delta = \min(-\delta^-, \delta^+) \in (0,\infty]$.
If $\delta=+\infty$ then $\delta^-=-\infty$ and $\delta^+=+\infty$ which implies that $\game$ is reduced and the iteration stops.
Otherwise we have $\delta>0$ and we consider the non-negative potential given by
\[
\phi(v) = \begin{cases}
\delta &\text{ if } v \in P^* \\
0 & \text{ if } v \in N^*.
\end{cases}
\]
We call it the GKK potential associated to $\game$.
Note that it is symmetric up to shifting by $-\delta/2$, and therefore so is the corresponding potential reduction; it adds $\delta$ to the weight of edges from $N^*$ to $P^*$, removes $\delta$ to the weight of edges from $P^*$ to $N^*$, and leaves other edges unchanged.
Lemma~\ref{lem:relevance_deltam} and the symmetric variant together yield the following result.

\begin{corollary}\label{cor:bi_safe}
The potential $\phi$ is bi-safe.
\end{corollary}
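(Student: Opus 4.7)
The plan is to verify the two safety conditions—positive and negative—directly from the definition of $\phi$, using Lemma~\ref{lem:relevance_deltam} together with its symmetric counterpart (which bounds $\Enp_0$ from below by $\delta^+$ on $P^*$). The argument should essentially amount to bookkeeping: the quantity $\delta = \min(-\delta^-,\delta^+)$ is calibrated precisely so that both safety bounds fall out of the two relevance lemmas.

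First I would handle positive safety. The inequality $0 \leq \phi$ is immediate from the definition, since $\phi$ takes values in $\{0,\delta\}$ with $\delta > 0$. For the upper bound $\phi \leq \Enp_0$, I would split by cases: on $N^*$ we have $\phi = 0 \leq \Enp_0$ because energy values are always non-negative, and on $P^*$ we have $\phi = \delta \leq \delta^+ \leq \Enp_0$, where the last inequality is the symmetric version of Lemma~\ref{lem:relevance_deltam}.

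Next I would tackle negative safety. Assuming $P^* \neq \varnothing$ (otherwise $\phi \equiv 0$ and everything is trivial), we have $\max \phi = \delta$, so $\phi^-$ takes value $0$ on $P^*$ and $-\delta$ on $N^*$. The upper bound $\phi^- \leq 0$ is immediate. For $\Enm_0 \leq \phi^-$, I split again: on $P^*$ we use that $\Enm_0 \leq 0$ always, and on $N^*$ I invoke Lemma~\ref{lem:relevance_deltam} to get $\Enm_0 \leq \delta^- \leq -\delta = \phi^-$, where the second inequality follows from $\delta \leq -\delta^-$ by definition of $\delta$.

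I do not anticipate any substantive obstacle: all the real work has already been performed in Lemma~\ref{lem:relevance_deltam}, and the corollary simply records that the specific value $\delta = \min(-\delta^-,\delta^+)$ is exactly the largest shift compatible with both safety constraints. The only minor care needed is treating the degenerate cases where $N^*$ or $P^*$ is empty, but in each such case the required bounds either collapse to $0 \leq \Enp_0$ or $\Enm_0 \leq 0$, both of which hold universally.
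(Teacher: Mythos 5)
Your proof is correct and follows exactly the route the paper intends: the paper gives no explicit argument for this corollary beyond citing Lemma~\ref{lem:relevance_deltam} and its symmetric variant, and your case analysis (including the degenerate cases and the observation that $\delta = \min(-\delta^-,\delta^+)$ is calibrated to make both bounds work) is precisely the omitted bookkeeping.
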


Without the simplicity assumption over $\game$, one has to deal with vertices from which neither player can attract to a weight of corresponding sign.
In~\cite{GKK88}, such vertices are put in $N^*$, and therefore the obtained potential $\phi$ remains positively-safe, but it is no longer negatively safe.
It is thus unclear how to generalise our approach to non-simple games: as it will appear in the next section, bi-safety is crucial to derive our novel upper bound.

\section{Improved pseudopolynomial bound}\label{sec:pseudopolynomial}

Following~\cite{GKK88}, we say that extremal edges of a vertex $v$ are those with minimal weight if $v \in \VMin$ and of maximal weight if $v \in \VMax$.
The extremal weight of $v$ is the weight of its extremal edges. 
We say that a vertex is negative, zero, or positive according to the sign of its extremal weight, and let\footnote{We apologise for the clash in notations with our notation $N$ for the maximal absolute value of a weight; it is easily resolved thanks to context.} $N,Z$ and $P$ denote the corresponding subsets of vertices.
Note that $N \subseteq N^*$ and $P \subseteq P^*$, while $Z$ is split between both.
The following was already observed in~\cite{GKK88}, a proof is given in Appendix~\ref{app:evolution_n_p} for completeness.

\begin{lemma}[\cite{GKK88}]\label{lem:evolution_n_p}
Let $\game' = \game_\phi$ where $\phi$ is the GKK potential associated to $\game$, and let $N'$ and $P'$ respectively denote the sets of negative and positive vertices in $\game'$.
We have $N' \subseteq N$ and $P' \subseteq P$.
\end{lemma}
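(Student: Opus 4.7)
The plan is to prove the contrapositive: if $v$ has non-negative extremal weight in $\game$, then it has non-negative extremal weight in $\game'$, so $v \notin N'$. By the sign-flipping symmetry between $N^*$ and $P^*$, the statement $P' \subseteq P$ will follow by the mirror argument, swapping $\Min \leftrightarrow \Max$ and replacing the bound $\delta \leq \delta^+$ by $\delta \leq -\delta^-$; I focus on $N' \subseteq N$ and split on which of the four classes $N^* \cap \VMin$, $N^* \cap \VMax$, $P^* \cap \VMin$, $P^* \cap \VMax$ contains $v$.

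Two of the four cases are immediate. When $v \in N^*$, we have $\phi(v) = 0$ and $\phi(v') \in \{0, \delta\}$ with $\delta > 0$, so $w_\phi(vv') = w(vv') + \phi(v') \geq w(vv')$ for every outgoing edge; thus both the min and max of the outgoing weights can only increase in $\game'$, and the extremal weight stays non-negative. When $v \in P^* \cap \VMin$, one observes that $v$ has no negative outgoing edge and no zero edge to $N^*$, else $v$ would itself belong to $N^*$ by the Min-attractor construction; hence every edge from $v$ to $N^*$ has weight $\geq \delta_\Min^+ \geq \delta^+ \geq \delta$, so its $\phi$-modified weight is $\geq 0$. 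The remaining edges from $v$ to $P^*$ are unchanged and $\geq 0$ by assumption, so the minimum outgoing weight in $\game'$ is $\geq 0$.

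The main obstacle is the case $v \in P^* \cap \VMax$, where I further split based on whether $v$ has a non-negative edge to $P^*$. If so, this edge is unchanged and directly witnesses extremal weight $\geq 0$ in $\game'$. Otherwise all of $v$'s edges to $P^*$ are strictly negative, which places $v$ in the set $SP$; since $v \in P^*$ is a Max vertex with no non-negative edge to $P^*$, $v$ is forced to have a positive outgoing edge, which necessarily points to $N^*$. Thus the extremal weight $W = \max_{v'} w(vv')$ is positive and attained on some edge to $N^*$. The definition of $\delta_\Max^+$ as a minimum over $SP$ then yields $\delta \leq \delta^+ \leq \delta_\Max^+ \leq W$, so the $\phi$-modified weight $W - \delta$ of this extremal edge is $\geq 0$, completing the case analysis.
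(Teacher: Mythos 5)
Your proof is correct and follows essentially the same route as the paper's: the paper shows $\ext(v)\leq\ext'(v)\leq 0$ on $N^*$ and $\ext(v)\geq\ext'(v)\geq 0$ on $P^*$, and your case analysis (weights out of $N^*$ only increase; for $P^*\cap\VMin$ edges to $N^*$ have weight $\geq\delta_\Min^+\geq\delta$; for $P^*\cap\VMax$ either a non-negative edge to $P^*$ survives or $v\in SP$ and $\delta\leq\delta_\Max^+\leq\ext(v)$) is exactly the contrapositive packaging of the same facts, with the dual cases written out instead of invoked by symmetry.
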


We now let $\game=\game^0,\game^1,\game^2, \dots$ denote the sequence of games encountered throughout the iteration, inductively defined by $\game^{j+1}=\game^j_{\phi^j}$, where $\phi^j$ is the GKK potential associated to $\game^j$ (if it is defined).
We use obvious notations such as $N^j,P^{*,j}$ or $\delta^j$; in particular, $\game^{j+1}$ is defined if and only if $\delta^j<\infty$.
Given $j$ such that $\game^{j}$ is defined we moreover let $\Delta^{j} = \sum_{j'=0}^{j} \delta^{j'}$ and $\Phi^{j} = \sum_{j'=0}^{j} \phi^{j'}$.
Note that we have $\game^{j+1}=\game^0_{\Phi^{j}}$ for all $j \geq 0$.
The following is a direct consequence of Lemma~\ref{lem:evolution_n_p}.

\begin{corollary}\label{cor:bounds_on_potentials}
For all $j \geq 0$, it holds that $\Phi^j$ takes value $0$ over $N^j$ and $\Delta^j$ over $P^j$.
\end{corollary}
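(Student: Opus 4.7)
The plan is to proceed by straightforward induction on $j$, where the main ingredient is the monotonicity provided by Lemma~\ref{lem:evolution_n_p}. Throughout, I will use that by construction of the GKK potential, $\phi^j$ takes value $0$ over $N^{*,j}$ and $\delta^j$ over $P^{*,j}$, and consequently (since $N^j \subseteq N^{*,j}$ and $P^j \subseteq P^{*,j}$) value $0$ on $N^j$ and $\delta^j$ on $P^j$.

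For the base case $j=0$, we have $\Phi^0 = \phi^0$ and $\Delta^0 = \delta^0$, so the claim reduces immediately to the observation just made.

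For the inductive step, assume the conclusion holds for $j-1$. By Lemma~\ref{lem:evolution_n_p} applied to $\game^{j-1}$ and $\game^j = \game^{j-1}_{\phi^{j-1}}$, we have $N^j \subseteq N^{j-1}$ and $P^j \subseteq P^{j-1}$. Writing $\Phi^j = \Phi^{j-1} + \phi^j$ and evaluating on $v \in N^j$: by induction $\Phi^{j-1}(v) = 0$ since $v \in N^{j-1}$, and $\phi^j(v) = 0$ since $v \in N^j \subseteq N^{*,j}$; hence $\Phi^j(v) = 0$. Symmetrically, for $v \in P^j$: induction gives $\Phi^{j-1}(v) = \Delta^{j-1}$ since $v \in P^{j-1}$, and $\phi^j(v) = \delta^j$ since $v \in P^j \subseteq P^{*,j}$; summing yields $\Phi^j(v) = \Delta^{j-1} + \delta^j = \Delta^j$.

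There is no real obstacle here; the only thing to check carefully is the chain of inclusions $N^j \subseteq N^{j-1}$ (given by Lemma~\ref{lem:evolution_n_p}) together with $N^j \subseteq N^{*,j}$ (trivial, since extremal-negative vertices belong to $N^*$), which together allow us to apply both the induction hypothesis to $\Phi^{j-1}$ and the description of $\phi^j$ at the same vertex.
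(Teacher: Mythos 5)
Your proof is correct and is essentially the paper's own argument: the paper sums $\phi^{j'}(v)$ directly over $j'\leq j$ using the chain $N^0 \supseteq N^1 \supseteq \dots \supseteq N^j$ from Lemma~\ref{lem:evolution_n_p} together with $N^{j'} \subseteq N^{*,j'}$, which is exactly what your induction unrolls to. No differences worth noting.
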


\begin{proof}
Thanks to Lemma~\ref{lem:evolution_n_p} we have $N^0 \supseteq N^1 \supseteq \dots \supseteq N^j$, therefore if $v \in N^j$ then for all $j' \leq j$, $v$ belongs to $N^{j'} \subseteq N^{*,j'}$ and thus $\phi^{j'}(v) =0$; the first result follows.
Likewise, if $v \in P^j$ then for all $j' \leq j$ we have $\phi^{j'}(v) = \delta^{j'}$ therefore $\Phi^j(v)=\Delta^j$.
\end{proof}

With this is hands we are ready to prove the announced result.

\begin{theorem}
\label{thm:pseudopoly_bound}
The iteration terminates in at most $N + E^+ + E^- +1$ steps, where $E^+$ is the maximal finite energy value in $\game$, and $E^-$ is minus the minimal finite dual energy value.
\end{theorem}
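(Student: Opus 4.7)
The plan is to establish the core estimate $\Delta^j \le N + E^+ + E^-$ at every iteration $j$ for which $\game^{j+1}$ is well-defined (equivalently, $\delta^j < \infty$). Since weights (and potentials) are integral, $\delta^j \ge 1$ at every productive iteration, so this bounds the number of productive iterations by $N + E^+ + E^-$; the $+1$ in the statement absorbs the final check confirming that $\game^J$ is reduced.

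The first step is, by induction using Corollary~\ref{cor:bi_safe} and Lemma~\ref{lem:safety_composition}, to note that the accumulated potential $\Phi^j$ is bi-safe for $\game^0$. Together with Corollary~\ref{cor:bounds_on_potentials} (giving $\max \Phi^j = \Delta^j$ whenever $P^j$ is non-empty, which I would check holds whenever the iteration has not yet terminated), this yields the pointwise inequalities $0 \le \Phi^j(u) \le \Enp_0(u)$ and $\Phi^j(u) \ge \Delta^j + \Enm_0(u)$ for all $u \in V$.

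Next, pick a witness for $\delta^j$: assuming WLOG that $\delta^j = -\delta^{-,j}$ (the case $\delta^j = \delta^{+,j}$ is symmetric), unfolding the definition extracts an edge $e = vv'$ in $\game^j$ with $v \in N^{*,j}$, $v' \in P^{*,j}$ and $w^j(e) \le -\delta^j$. Since $\game^j = \game^0_{\Phi^{j-1}}$, we have $w^j(e) = w^0(e) + \Phi^{j-1}(v') - \Phi^{j-1}(v)$, and combining this with $w^0(e) \ge -N$ gives $\Phi^{j-1}(v) - \Phi^{j-1}(v') \ge \delta^j - N$. Plugging in the bi-safety bounds $\Phi^{j-1}(v) \le \Enp_0(v)$ and $\Phi^{j-1}(v') \ge \Delta^{j-1} + \Enm_0(v')$ and rearranging yields the core inequality
\[
\Delta^j \;=\; \Delta^{j-1} + \delta^j \;\le\; N + \Enp_0(v) + \bigl(-\Enm_0(v')\bigr).
\]

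The main obstacle is to ensure that the right-hand side is finite, i.e., that the witness can be chosen with $\Enp_0(v) \le E^+$ and $-\Enm_0(v') \le E^-$. Because $\game$ is simple, every vertex lies in exactly one of $V^- = \{u : \Enp_0(u) < \infty\}$ and $V^+ = \{u : \Enm_0(u) > -\infty\}$, so this means locating a witness with $v \in V^-$ and $v' \in V^+$. The approach I would take combines Lemma~\ref{lem:relevance_deltam} with Corollary~\ref{cor:mean_payoffs_and_energies}: for any $v \in N^{*,j}$ the inequality $\Enm_{\game^j}(v) \le -\delta^j$ forces a dichotomy, either $\Enm_{\game^j}(v) = -\infty$ (and hence $v \in V^-$) or $\Enm_{\game^j}(v)$ is finite (and hence $v \in V^+$), and symmetrically for $v' \in P^{*,j}$. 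A case analysis on the extremal edges realising $\delta_{\Min}^-$ and $\delta_{\Max}^-$, together with the freedom to fall back onto the $\delta^+$-witness when the $\delta^-$-side offers no good pair, should exhibit a witness whose endpoints have the desired mean-payoff signatures.

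I expect this final sub-step to be the hardest: an \emph{a priori} witness might involve a "transient" vertex whose side in $N^{*,j}/P^{*,j}$ does not yet match its mean-payoff sign. Once the existence of a good witness is settled, the theorem follows immediately by combining the core inequality with $\delta^j \ge 1$.
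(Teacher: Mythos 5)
Your setup is sound and your target estimate is exactly the right one: the paper's proof amounts precisely to the contrapositive statement that $\delta^j<\infty$ implies $\Delta^j\le N+E^++E^-$. The bi-safety bounds $\Phi^{j}(u)\le \Enp_0(u)$ on $N^{\infty,*}$ and $\Phi^{j}(u)\ge \Delta^{j}+\Enm_0(u)$ on $P^{\infty,*}$, and the arithmetic of your ``core inequality'', are all correct. The gap is exactly where you suspect it is, and it is a real one. Nothing in the definition of $\delta^j$ lets you choose the witness edge $vv'$ with $v\in N^{*,j}\cap N^{\infty,*}$ and $v'\in P^{*,j}\cap P^{\infty,*}$: the partition $N^{*,j},P^{*,j}$ need not coincide with the mean-payoff partition before termination, and the extremal edges defining $\delta_\Max^-$ and $\delta_\Min^-$ may all be incident to ``misclassified'' vertices (e.g.\ a Max vertex of positive mean-payoff value temporarily sitting in $N^{*,j}$). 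The dichotomy you invoke --- $\Enm_{\game^j}(v)$ is either $-\infty$ or finite --- only restates that every vertex of a simple game lies in $N^{\infty,*}$ or $P^{\infty,*}$; it gives no control over which side the endpoints of any particular witness fall on, so the proposed case analysis does not get off the ground. When the witness is bad, your computation only yields a bound on the single increment (something like $\delta^j\le N+E^-$), not on the accumulated sum $\Delta^j$, and the induction collapses.

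The paper closes this gap by a global argument rather than a local witness. Assuming $\Delta^j\ge N+E^++E^-+1$, the two potential bands $[0,E^+]$ (containing $\Phi^j$ on $N^{\infty,*}$) and $[\Delta^j-E^-,\Delta^j]$ (containing $\Phi^j$ on $P^{\infty,*}$) are separated by more than $N$, so in $\game^j$ every edge from $N^{\infty,*}$ to $P^{\infty,*}$ is strictly positive and every edge back is strictly negative. Hence zero edges of $\game^j$ never cross between the two classes; since $N^j\subseteq N^{\infty,*}$ and $P^j\subseteq P^{\infty,*}$ (by Corollary~\ref{cor:bounds_on_potentials} together with the band separation), the attractor computation then forces $N^{j,*}=N^{\infty,*}$ and $P^{j,*}=P^{\infty,*}$, whence $SN=SP=\varnothing$ and there are no negative Max-edges from $N^{j,*}$ to $P^{j,*}$, i.e.\ $\delta^j=\infty$: there is no witness at all, good or bad. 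To complete your proof you would need essentially this stabilization argument, at which point the local witness becomes unnecessary.
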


The proof is illustrated in Figure~\ref{fig:pseudopoly_upper_bound}.

\begin{figure}[h]
\begin{center}
\includegraphics[width=0.7\linewidth]{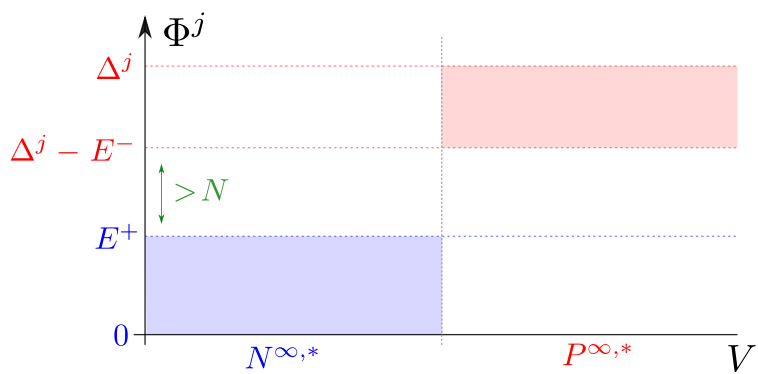}
\end{center}
\caption{An illustration for the proof of Theorem~\ref{thm:pseudopoly_bound}, where $j=N+ E^+ + E^-$.
Since $\Psi^j$ it is positively safe, vertices with finite $\Enp$ value (denoted $N^{\infty,*}$) must be mapped to the blue region, and symmetrically; by our choice of $j$, this implies that edges from $N^{\infty,*}$ to $P^{\infty,*}$ are positive, and those from $P^{\infty,*}$ to $N^{\infty,*}$ are negative, which is key to the proof.}
\label{fig:pseudopoly_upper_bound}
\end{figure}

\begin{proof}
We let $N^{\infty, *}$ and $P^{\infty,*}$ respectively denote the sets of vertices with negative and positive mean-payoff values, which partition $V$.
Since $\Phi^j$ is positively safe by Corollary~\ref{cor:bi_safe} and Lemma~\ref{lem:safety_composition} (and the quantities below are finite), we have thanks to Theorem~\ref{thm:potential_reduction_theorem} for all $j$ that over $v \in N^{\infty,*}$,
\[
\Phi^{j}(v) = \Enp_\game(v) - \Enp_{\game^j}(v) \leq E^+.
\]
Likewise, over $v \in P^{\infty,*}$ we obtain $
\Phi^{j,-}(v) = \Enm_\game(v) - \Enp_{\game^j}(v) \geq -E^-,
$
which rewrites as
\[
\Phi^{j}(v) \geq \Delta^j - E^-.
\]
We now assume that the $j=N + E^+ + E^-$-th iteration is defined, and for contradiction that $\delta^{j} < \infty$.
Note that $\Delta^j \geq j+1$ as a sum of $j+1$ positive integers.
Note that $N^j$ (and symmetrically, $P^j$) is non-empty: if $N^j=\varnothing$ then $P^j=V$ therefore $\delta^j=\infty$. (Intuitively, Max could then ensure that no negative weight is ever seen.)

By Corollary~\ref{cor:bounds_on_potentials}, $\Phi^j$ takes value $0$ over $N^j$ therefore $N^j \subseteq N^{\infty,*}$ thanks to the above since $0< \Delta^j - E^-$ (see Figure~\ref{fig:pseudopoly_upper_bound}; vertices of value zero cannot belong to the red zone). 
Likewise, we have $P^j \subseteq P^{\infty,*}$ since $ \Delta^j>E^+$.

Note that any edge $vv'$ from $N^{\infty,*}$ to $P^{\infty,*}$ has weight
\[
w_{\Phi^j}(vv')=w(vv') + \Phi^j(v') - \Phi^j(v) \geq w(vv') + \Delta^j - E^- - E^+ \geq -N + \Delta^j - E^- - E^+ \geq 1
\]
in $\game^j$.
Likewise, any edge from $P^{\infty,*}$ to $N^{\infty,*}$ has weight $<0$ in $\game^j$, therefore zero edges cannot lead from $N^{\infty,*}$ to $P^{\infty,*}$ or vice-versa.

Now observe that by definition vertices in $N^{j,*}$ have a path to $N^j \subseteq N^{\infty,*}$ comprised only of zero weights in $\game^j$, and therefore it must be that $N^{j,*} \subseteq N^{\infty,*}$.
Similarly, we have $P^{j,*} \subseteq P^{\infty,*}$ and thus the two partitions are equal:
\[
N^{j,*} = N^{\infty,*} \qquad \text{ and } \qquad P^{j,*} = P^{\infty,*}.
\]
Since all edges from $N^{j,*}$ to $P^{j,*}$ are positive, we have $\delta^-=-\infty$.
Likewise $\delta^+=\infty$ and therefore $\delta=\infty$, a contradiction.
\end{proof}

\section{Combinatorial bound}\label{sec:combinatorial_bound}

We now concentrate on establishing the following result.

\begin{theorem}\label{thm:combinatorial_bound}
The number of iterations of the GKK algorithm is $O(2^{n/2})$.
\end{theorem}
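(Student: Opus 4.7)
The plan is to port the combinatorial counting argument of Dorfman, Kaplan and Zwick~\cite{DKZ19} into the GKK framework. Whereas~\cite{DKZ19} originally phrased this argument in terms of a value-iteration procedure on energy games, the key observation is that the sequence of GKK potentials $(\Phi^j)_j$ plays exactly the role of DKZ's increasing sequence of energy estimates, so that their combinatorial $O(2^{n/2})$ bound transfers to the number of GKK iterations.

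The first step is to verify the structural properties required by the DKZ counting argument. Monotonicity $\Phi^j \leq \Phi^{j+1}$ is immediate since each $\phi^j$ is non-negative. Combining Corollary~\ref{cor:bi_safe} with Lemma~\ref{lem:safety_composition}, every $\Phi^j$ is bi-safe; Theorem~\ref{thm:potential_reduction_theorem} then sandwiches $0 \leq \Phi^j \leq \Enp_\game$ on vertices of finite positive-energy value, and symmetrically $\Enm_\game \leq \Phi^{j,-} \leq 0$ on vertices of finite dual-energy value. At any non-terminating iteration $j$ (where $\delta^j < \infty$ and hence $P^{*,j} \neq \varnothing$) the potential strictly increases by $\delta^j > 0$ on every vertex of $P^{*,j}$, and symmetrically the negative shift strictly decreases on every vertex of $N^{*,j}$.

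The second step is to invoke DKZ's counting argument on this sequence. Their analysis bounds the length of an increasing sequence of such bi-safe potentials by $O(2^{n/2})$, via a pigeonhole on a combinatorial invariant that refines from one step to the next. In the GKK setting, the invariant can be defined uniformly using the monotonically shrinking sets $N^j$ and $P^j$ of Lemma~\ref{lem:evolution_n_p}, and the complete symmetry of our presentation lets us handle $N^{*,j}$ and $P^{*,j}$ in one stroke. The main obstacle is translating DKZ's vertex-by-vertex update scheme into GKK's batched updates, in which all of $P^{*,j}$ (respectively $N^{*,j}$) is simultaneously shifted by $\delta^j$; however, as anticipated in the introduction, this batching aligns naturally with DKZ's counting and, together with the symmetry, should make the instantiation conceptually cleaner than the original statement, rather than harder.
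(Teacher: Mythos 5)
Your proposal is a plan rather than a proof: the entire combinatorial content of the argument is deferred to ``DKZ's counting argument,'' and the way you characterise that argument is not correct. An increasing sequence of bi-safe potentials, each step adding $\delta^j \geq 1$ on $P^{*,j}$, only yields the pseudopolynomial bound (the potentials are sandwiched between $0$ and the energy values, hence at most $nN$ steps); no amount of monotonicity or bi-safety of the $\Phi^j$ by itself gives a bound independent of $N$. The combinatorial $O(2^{n/2})$ bound has nothing to do with the sizes of the potential increments --- it comes from tracking a purely structural invariant of the partition $N^*, P^*$ across iterations, and that invariant is precisely what your proposal never defines.

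Concretely, the missing ingredients are: (i) the \emph{alternating layers} $A_1, A_2, \dots$ of $N^*$, defined via the minimal number of $\VMin$/$\VMax$ alternations along zero paths towards the set $N$ of negative vertices; (ii) the theorem that, as long as $N$ and $P$ do not shrink, the sequence $-|A_1|, |A_2|, -|A_3|, \dots$ strictly increases lexicographically from one iteration to the next --- this is the technically hard part, requiring a careful analysis of which vertices depart from and arrive into $N^*$; and (iii) the encoding of the layer sizes of $N^*$ and of $P^*$ into two integers $\alpha^-$ and $\alpha^+$ whose sum increases enough at each step that at most $O(2^{(n-k)/2})$ iterations can share the same value of $k = |N| + |P|$ --- it is this symmetric double encoding, not the ``batching'' of updates, that lowers the exponent from $n$ to $n/2$. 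Lemma~\ref{lem:evolution_n_p} alone, which you invoke, only bounds the number of phases in which $N$ or $P$ shrinks by $n$; it says nothing about how long a single phase lasts. As written, your argument does not establish the bound.
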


It implies that the GKK algorithm also matches the state of the art combinatorial bound of~\cite{DKZ19}; we actually believe that the two algorithms are very similar in essence.
Note that the simplicity assumption can be lifted without loss of generality here: there is no combinatorial blow up in the reduction stated in Lemma~\ref{lem:lifting_simplicity}.
The algorithm of~\cite{DKZ19} has the advantage of benefiting in general from the $O(nmN)$ upper bound inherited from that of~\cite{BCDGR11}, regardless of simplicity.
Inversely, it is not clear whether our improved pseudopolynomial bound holds for the algorithm of~\cite{DKZ19}, even when it is ran over simple arenas.

Our proof of Theorem~\ref{thm:combinatorial_bound} is directly based on that of~\cite{DKZ19}, which we break into two steps.
First, we partition $N^*$ into non-empty layers $A_1,A_2,\dots$ and prove that the sequence $-|A_1|,|A_2|,\dots$ strictly grows lexicographically.
Establishing lexicographical growth of the sequence turns out to be quite technical, already in~\cite{DKZ19}; we believe that our argumentation is essentially the same, although conceptually simpler (and symmetrical) for the GKK algorithm.
The second step is an ingenious encoding into integers which exploits the symmetry to lower the obtained upper bound from the naive $2^n$ to $2^{n/2}$.

Step one relies on so-called alternating layers, which are defined with respect to minimal number of alternations between $\VMin$ and $\VMax$ for zero paths in $N^*$ towards $N$.
A similar result is derived in~\cite{GKK88} directly for the attracting layers, with a simpler proof.
It is required however for the second step to apply that nonzero integers appearing in the sequence alternate between positive and negative, which is not the case for attracting layers in general.
Assuming that the game is bipartite however (this incurs no loss of generality), one may combine the result of~\cite{GKK88} with the encoding of~\cite{DKZ19} and obtain the same result; here, we prefer to follow the two steps of~\cite{DKZ19} which allows to establish Theorem~\ref{thm:combinatorial_bound} in general.

\paragraph*{Step one: layers and their dynamics}

Again, we focus on $N^*$, but will later use the main result together with its dual to obtain the wanted bound.
Given a finite path $\pi:v_0 \re v_1 \re \dots \re v_{k}$ in $\game$ we define its number of alternations (towards $N$) $\alt(\pi) \in [0,\infty]$ to be the minimal $\ell$ such that there exist a decreasing sequence of $\ell+1$ indices $k \geq i_0 \geq i_1 \geq \dots \geq i_\ell$ such that
\begin{itemize}
\item $v_{i_0}, \dots, v_{k} \in N$,
\item for all $j \in [1,\ell]$, $v_{i_j},\dots,v_{i_{j-1}-1}$ all belong to $\VMax$ if $j$ is odd and to $\VMin$ if $j$ is even.
\end{itemize}

In particular a path has finite alternation number if and only if it ends in $N$ and it has alternation number 0 if and only if it is contained $N$.
Moreover note that a path from $v \notin N$ towards $N$ has even alternation number if and only if $v \in \VMin$.
The choice of the first layer being comprised of Max vertices is arbitrary, the proof below also goes through with the inverse convention.

We say that a path is zero if it visits only zero edges.
We define the {alternation depth $\alt(v)$ over vertices in $N^*$ by
\[
\alt(v) = \min\{\alt(\pi) \mid \pi \text{ is a zero path from $v$ to $N$ which remains in } N^*\}.
\]

An example is given in Figure~\ref{fig:alternating_layers}.
We say that a path from $v \in N^*$ is optimal if it is a zero path from $v$ to $N$ which remains in $N^*$ and achieves the above minimum.
Note that by definition of $N^*$, vertices in $N^*$ have a simple zero path towards $N$ hence $\alt(v)$ is finite and bounded by $n$.

\begin{figure}[ht]
\begin{center}
\includegraphics[width=0.9\linewidth]{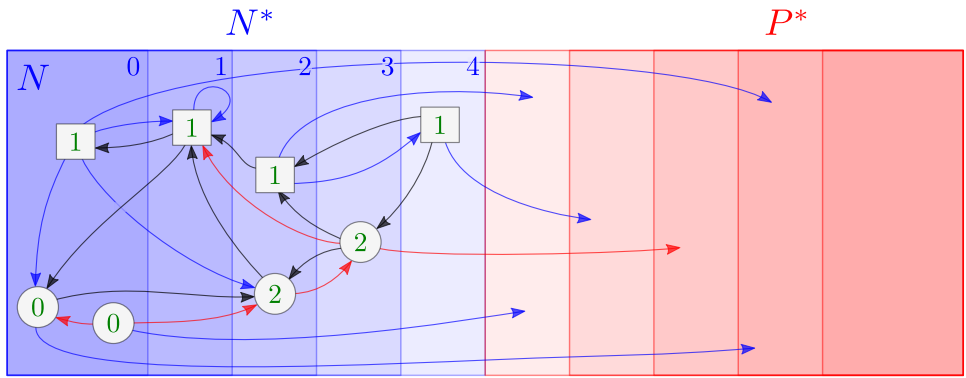}
\end{center}
\caption{The alternating layers, indicated by the green numbers, in the example of Figure~\ref{fig:gkk_partition}.
Notice that alternating layers (green numbers) and attractor layers (in blue) are completely different; however -- and quite surprisingly -- a close variant of Theorem~\ref{thm:strict_increase} holds for attractor layers (see~\cite{GKK88} for details).}
\label{fig:alternating_layers}

\end{figure}

We will study the dynamics of the sets
\[
A_i = \{v \in N^* \mid \alt(v)=i\}.
\]
We assume that the iteration is not over, $\delta < \infty$.
We use the notation $\game'$ for $\game^{\phi}$, where $\phi$ is the GKK potential and use primes for sets and quantities relative to $\game'$.
The following is the main result for the first step, it is proved in Appendix~\ref{app:strict_increase}.

\begin{theorem}
\label{thm:strict_increase}
If $N=N'$ and $P=P'$, then the sequence
\[
-|A_1|,|A_2|,-|A_3|,|A_4|, \dots
\]
strictly grows lexicographically.
\end{theorem}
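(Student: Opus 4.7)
I would prove Theorem~\ref{thm:strict_increase} by pinpointing the smallest index $i^* \geq 1$ at which $|A_{i^*}| \neq |A_{i^*}'|$ and verifying the correct sign: $|A_{i^*}'| < |A_{i^*}|$ when $i^*$ is odd, and $|A_{i^*}'| > |A_{i^*}|$ when $i^*$ is even. The starting point is to isolate the effect of $\phi$ on zero edges. Since $\phi$ is constant on each of $N^*$ and $P^*$, every edge whose endpoints lie in the same block keeps its sign in $\game'$; only crossing edges shift, gaining $\delta$ from $N^*$ to $P^*$ and losing $\delta$ in the other direction. In particular, the zero edges traversed by paths staying inside $N^*$ are identical in $\game$ and $\game'$, and the hypothesis $N = N'$ immediately gives $A_0 = A_0'$, making the two layer sequences comparable from the first index.

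The core of the argument is a monotonicity lemma comparing $\alt$ and $\alt'$. Proceeding by induction on $i$, I would show that for each vertex $v \in N^* \cap N^{*'}$ any optimal zero-alternating path witnessing $\alt(v) = i$ in $\game$ remains legal in $\game'$ as long as it stays inside $N^*$ and does not traverse a crossing edge whose weight shifted to a nonzero value; conversely, an optimal path in $\game'$ gives a legal path in $\game$ unless it uses a newly created zero edge. The conclusion is that a vertex can only change layer through a crossing edge whose sign flipped under the reduction, and such edges are precisely those realising $\delta_\Min^-, \delta_\Max^-, \delta_\Min^+$ or $\delta_\Max^+$. Using the hypothesis $N = N'$ and $P = P'$, I would then conclude that $A_j = A_j'$ for all $j < i^*$, as the lex comparison requires.

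To certify strict growth at $i^*$, I would produce an explicit witness. If $i^*$ is odd, the natural candidate is a $\VMax$-vertex in $A_{i^*}$ whose extremal edge realises $\delta_\Max^-$: under the reduction this edge becomes zero, opening a strictly shorter alternation route in $\game'$ and pushing the vertex to a shallower layer, which gives $|A_{i^*}'| < |A_{i^*}|$. The monotonicity lemma then rules out any new vertex entering $A_{i^*}$ without contradicting minimality at some smaller index. The case of even $i^*$ is handled symmetrically by exchanging the roles of $N^*/\delta^-$ and $P^*/\delta^+$ and appealing to the dual of Lemma~\ref{lem:relevance_deltam}.

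The main obstacle is the careful bookkeeping required to make the monotonicity lemma sharp enough to guarantee $A_j = A_j'$ for all $j < i^*$, and to pin down the correct witness vertex at level $i^*$ without a case explosion. This will depend critically on the bi-safety of $\phi$ (Corollary~\ref{cor:bi_safe}) and on the fully symmetric treatment of $N^*$ and $P^*$ afforded by the simplicity assumption; together, I expect these two ingredients to make the analysis conceptually cleaner than the energy-based argument of Dorfman, Kaplan and Zwick, which has to juggle asymmetric potentials and fictitious infinite values.
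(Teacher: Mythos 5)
Your skeleton (locate the first index where the layers differ, prove set equality below it by induction, exhibit a witness at that index, read off the sign from parity) is the same as the paper's, but three of your steps do not go through as described. First, the odd-case witness is wrong mechanically: an edge realising $\delta_\Max^-$ goes from a $\VMax$-vertex of $N^*$ \emph{into} $P^*$, so when its weight becomes $0$ the vertex acquires a zero edge to $P^{'*}$ and is absorbed into $P^{'*}$ --- it \emph{departs} from the $N^*$-side entirely rather than moving ``to a shallower layer''; indeed, landing in a shallower layer $j<i^*$ would increase $|A'_j|$ and contradict your own minimality of $i^*$. Second, the even case is not obtained ``symmetrically by exchanging the roles of $N^*$ and $P^*$'': Theorem~\ref{thm:strict_increase} speaks only about the layers of $N^*$, and the even case is an \emph{arrival} --- a $\VMin$-vertex of $P^*$ whose edge of weight exactly $\delta$ into $N^*$ becomes zero, pulling it into $N^{'*}$. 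The work here is to show that every such arriving vertex lands in a layer of index at least the arrival index $i_A$ (the paper's items controlling $\alt'$ of paths that are zero in $\game'$ but not in $\game$); swapping $N^*$ and $P^*$ instead produces a statement about the layers of $P^*$, which is what Step Two of the paper uses for $\alpha^+$, not what is needed here.

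Third, and most importantly, you never prove that the index $i^*$ exists, i.e.\ that the two layer sequences differ at all --- without this there is no strict growth. The paper's item (vii) supplies this: under $N=N'$ and $P=P'$ one has $\delta_\Min^- < -\delta$ and $\delta_\Max^+ > \delta$ strictly (item (vi)), so if no arrival occurs then $\delta$ must be realised by $\delta_\Max^-$ and the corresponding Max vertex departs. The same fact (vi) is what rules out Min vertices departing from $N^*$ (their witnessing negative edges stay strictly negative), and hence is the actual justification for your asserted parity pattern ``odd $\Rightarrow$ decrease, even $\Rightarrow$ increase'': departures are Max vertices outside $N$ (odd alternation depth), arrivals enter through Min vertices of $P^*$ (even alternation depth). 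Finally, since a departure at odd index $i_D$ and an arrival at even index $i_A$ may both be pending, the bookkeeping must compare them: one needs $A_i\subseteq A'_i$ for $i\le\min(i_D-1,i_A)$ and $A'_i\subseteq A_i$ for $i\le\min(i_D,i_A-1)$, with the conclusion read off at $\min(i_D,i_A)$. Your ``monotonicity lemma'' gestures at this but does not state the two one-sided inclusions with the correct offsets, which is where the real technical content lies.
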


\paragraph*{Step two: encoding into integers}

We now present the second step for the proof of Theorem~\ref{thm:combinatorial_bound}, due to~\cite{DKZ19}.
We let $k$ denote $|P|+|N|$, which can only decrease throughout the iteration thanks to Lemma~\ref{lem:evolution_n_p}.
Note that there exists $r \in [1,n-k]$ such that the layers $A_1,\dots,A_r$ are non-empty and $A_{r+1},A_{r+2},\dots$ are empty.
We let $s_r=1$ if $r$ is even and $0$ otherwise.

The argument relies on the following $n-k+1$-bit integer
\[
\alpha^- = \underbrace{0\dots 0}_{|A_1|} \underbrace{1 \dots 1}_{|A_2|} \underbrace{0 \dots 0}_{|A_3|} \dots \underbrace{s_r \dots s_r}_{|A_r|} 1 \underbrace{0 \dots 0}_{|P^*| - |P|},
\]
and its symmetric counterpart $\alpha^+$, which is defined in exactly the same way with respect to layers in $P^*$.

\begin{lemma}
If $k=k'$ then $\alpha^{'-} > \alpha^- + 2^{|P^*| - |P|}$ and likewise $\alpha^{'+} > \alpha^+ + 2^{|N^*| - |N|}$.
\end{lemma}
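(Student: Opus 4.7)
The strategy is to translate the strict lexicographic growth of the signed layer-size sequence, guaranteed by Theorem~\ref{thm:strict_increase}, into a quantitative arithmetic increase of the integer $\alpha^-$.

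First I would observe that the hypothesis $k = k'$, combined with the inclusions $N' \subseteq N$ and $P' \subseteq P$ established in Lemma~\ref{lem:evolution_n_p}, forces $N = N'$ and $P = P'$ (two subsets of the same finite cardinality must coincide). Hence we are exactly in the setting of Theorem~\ref{thm:strict_increase}, and the sequence $-|A_1|, |A_2|, -|A_3|, \ldots$ strictly grows lexicographically when passing from $\game$ to $\game'$ (with the convention that missing entries are padded with zeros).

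Next I would analyze the binary encoding at the bit level. The pattern $0^{|A_1|}\, 1^{|A_2|}\, 0^{|A_3|} \cdots s_r^{|A_r|}\, 1\, 0^{|P^*|-|P|}$ is designed precisely so that the lex comparison of the signed layer sequence corresponds to numerical comparison of the encodings. Letting $i$ be the first index at which the two sequences differ, the inequality $s'_i > s_i$ forces, at some bit-position strictly above the ``extra $1$'' located at height $|P^*|-|P|$, the integer $\alpha^{'-}$ to carry a $1$ while $\alpha^-$ carries a $0$, with both agreeing strictly above this position. A standard accounting of the bits below (each bounded by $1$) then yields $\alpha^{'-} - \alpha^- > 2^{|P^*|-|P|}$.

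The main obstacle is the careful bookkeeping when $|P^*|$ changes across the iteration: even though $k = k'$ fixes the extremal cardinalities $|N|$ and $|P|$, the partition $N^*, P^*$ is recomputed after each potential reduction and may shift (a vertex of $N^* \setminus N$ may migrate into $P^{*'}$, or vice versa), so the trailing $0^{|P^*|-|P|}$ and $0^{|P^{*'}|-|P|}$ blocks in $\alpha^-$ and $\alpha^{'-}$ may have different lengths. To handle this I would align the two encodings using the conservation identity $|A_1| + \cdots + |A_r| + 1 + (|P^*|-|P|) = n-k+1$, which fixes the total bit-length of $\alpha^-$ and hence also that of $\alpha^{'-}$ (since $k = k'$), and I would track how vertices migrating between $N^*$ and $P^*$ simultaneously affect the leading layer blocks and the trailing zeros so as to guarantee the strict inequality rather than mere equality. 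The dual inequality $\alpha^{'+} > \alpha^+ + 2^{|N^*|-|N|}$ is then obtained verbatim by the symmetric version of the argument, swapping the roles of $N^*$ and $P^*$ and those of positive and negative signs throughout.
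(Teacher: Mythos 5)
Your proof follows the same route as the paper's: invoke Theorem~\ref{thm:strict_increase} (whose hypothesis $N=N'$ and $P=P'$ you correctly derive from $k=k'$ together with Lemma~\ref{lem:evolution_n_p}) and observe that the leftmost bit that differs switches from $0$ in $\alpha^-$ to $1$ in $\alpha^{'-}$ at a position above the trailing $1\,0\cdots 0$ block, from which the stated gap follows by accounting for the lower-order bits. You in fact supply two details the paper leaves implicit, namely the justification of the theorem's hypothesis and the alignment of the two encodings via the fixed total bit-length $n-k+1$.
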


\begin{proof}
By Theorem~\ref{thm:strict_increase} the leftmost bit to switch from $\alpha^-$ to $\alpha^{'-}$ switches from $0$ to $1$, and occurs before the rightmost block of the form $1 0 \dots 0$ with $|P^*|-|P|$ zeros.
\end{proof}

We are finally ready to prove the announced bound.

\begin{proof}[Proof of Theorem~\ref{thm:combinatorial_bound}]
Consider $\alpha = \alpha^- + \alpha^+$, which is $\leq 2^{n-k+2}$.
Note that $|N^*|-|N| + |P^*|-|P| = n-k$, hence $\max(|N^*|-|N|,|P^*|-|P|)\geq \frac {n-k} 2$.
By the above lemma, if $k'=k$ then
\[
\alpha' > 2^{\max(|N^*|-|N|,|P^*|-|P|)} \geq 2^{\frac {n-k} 2}.
\]
Hence, there are at most $2^{n-k+2} / 2^{\frac {n-k} 2} = 4 . 2^{\frac {n-k} 2}$ consecutive iterations with the same $k$.
The bound follows since
\[
\sum_{k=0}^{n-1} 4 . 2^{\frac {n-k} 2} = O(2^{n/2}). \qedhere
\]
\end{proof}



\bibliography{bib}

\newpage

\appendix

\section{Proof of Theorem~\ref{thm:potential_reduction_theorem}}\label{app:potential_reduction_theorem}

This first appendix is devoted to the proof of Theorem~\ref{thm:potential_reduction_theorem}.
The first item directly follows from the fact that mean-payoffs of infinite paths in $\game$ and $\game^\phi$ are the same.
The third item follows from the second by symmetry; thus we focus on the second item.
We use the following result.

\begin{lemma}\label{lem:optimal_strat}
Let $\sigma_0$ be an $\Enp$-optimal Min strategy in $\game$ and $\pi = v_0 \re v_1 \re \dots \re v_k$ be a finite path consistent with $\sigma_0$ such that $\Enp_0(v_k)<\infty$.
Then we have $
\summ(\pi) \leq \Enp_0(v_0) - \Enp_0(v_k)
$.
\end{lemma}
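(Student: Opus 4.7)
The plan is to derive the inequality by extending $\pi$ with an infinite near-optimal continuation from $v_k$ and invoking optimality of $\sigma_0$ at $v_0$. If $\Enp_0(v_0)=\infty$ the bound is vacuous (since $\Enp_0(v_k)<\infty$ and $\summ(\pi)$ is finite), so I would first dispatch this case and assume henceforth that $\Enp_0(v_0)<\infty$.

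Fix $\varepsilon>0$. By $\Enp$-optimality of $\sigma_0$ (Theorem~\ref{thm:positionality}), we have $\Enp_0(v_k)=\sup\{\Enp(w(\pi')):\pi'\in\Pi_{v_k}^\omega,\ \pi'\const\sigma_0\}$, and this supremum is finite; so I can select an infinite path $\pi'$ from $v_k$ consistent with $\sigma_0$ satisfying $\Enp(w(\pi'))\geq \Enp_0(v_k)-\varepsilon$. Since $\sigma_0$ is positional, the concatenation $\pi\cdot\pi'$ is an infinite path from $v_0$ that is consistent with $\sigma_0$, so applying optimality at $v_0$ yields $\Enp_0(v_0)\geq \Enp(w(\pi\cdot\pi'))$.

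It then remains to bound $\Enp(w(\pi\cdot\pi'))\geq \summ(\pi)+\Enp(w(\pi'))$: any prefix of $\pi\cdot\pi'$ of length $k+j$ with $j\geq 0$ has partial sum $\summ(\pi)+\sum_{i<j}w(\pi')_i$, and taking the supremum over such $j$ yields exactly $\summ(\pi)+\Enp(w(\pi'))$, which the global supremum over all prefixes dominates. Chaining the two inequalities and letting $\varepsilon\to 0$ then gives $\summ(\pi)\leq \Enp_0(v_0)-\Enp_0(v_k)$. The only subtle point I anticipate is identifying $\Enp_0(v_k)$ with a supremum attained (in the limit) through paths consistent with $\sigma_0$, which is exactly the content of the leftmost equality of Theorem~\ref{thm:positionality}; beyond this bookkeeping I do not expect any serious obstacle.
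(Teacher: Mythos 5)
Your proposal is correct and follows essentially the same route as the paper: extend $\pi$ by a continuation from $v_k$ consistent with $\sigma_0$, use positionality to see the concatenation is consistent, and bound $\Enp(w(\pi\pi'))$ from below by $\summ(\pi)+\Enp(w(\pi'))$ via restricting the supremum to prefixes of length $\geq k$. The only cosmetic difference is that you take an $\varepsilon$-near-optimal continuation and let $\varepsilon\to 0$, whereas the paper picks a continuation attaining the supremum exactly; both work.
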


\begin{proof}
Let $\pi'$ be an infinite path from $v_k$ consistent with $\sigma_0$ and such that $\Enp_0(v_k)=\Enp(w(\pi'))$.
Then $\pi \pi'$ is consistent with $\sigma_0$ thus $\Enp_0(v_0) \geq \Enp(w(\pi \pi'))$.
We thus obtain 
\[
\begin{array}{lcl}
\Enp(v_0) \ \ \geq \ \ \Enp(w(\pi\pi')) & = & \sup_{k' \geq 0}(\summ((\pi\pi')_{<k'}) \\ & \geq & \sup_{k' \geq k} (\summ((\pi \pi')_{<k'}) \\
& = & \summ(\pi) + \sup_{k' \geq 0} \summ(\pi'_{<k'}) \\
& = & \summ(\pi) + \Enp(w(\pi')) \ \  =  \ \ \summ(\pi) + \Enp_0(v_k). \qedhere
\end{array}
\]
\end{proof}

We now derive the wanted result.

\begin{proof}[Proof of second item of Theorem~\ref{thm:potential_reduction_theorem}]
Let $\phi: V \to \Z$ be a potential such that $0 \leq \phi \leq \Enp_0$; we aim to prove that $\Enp_0 = \phi + \Enp_\phi$ over $V$.
Over vertices with mean-payoff value $>0$ (which coincide over both games by the first item), both terms are infinite thanks to Corollary~\ref{cor:mean_payoffs_and_energies}.
Let $v$ be such a vertex with mean-payoff value $\leq 0$ (or equivalently, finite energy value).

Consider an $\Enp$-optimal Min strategy $\sigma_0: \VMin \to E$ in $\game$ and let $\pi=v_0 \re v_1 \re \dots$ be an infinite path consistent with $\sigma_0$.
Note that for any $k \geq 0$, $v_k$ has finite energy value, and thus we obtain thanks to Lemma~\ref{lem:optimal_strat}
\[
\begin{array}{lcl}
\summ_{\phi}(\pi_{<k}) & = & \summ(\pi_{<k}) + \phi(v_k) - \phi(v_0)\\
& \leq & \Enp_0(v_0)  \underbrace{- \Enp_0(v_k) + \phi(v_k)}_{\leq 0} - \phi(v_0) \ \ \leq \ \ \Enp_0(v_0) - \phi(v_0),
\end{array}
\]
hence $\Enp_{\phi}(v_0) \leq \sup_{\pi \const \sigma_0} \sup_{k \geq 0} \summ_\phi(\phi_{<k}) \leq \Enp_0(v_0) - \phi(v_0)$.

For the other inequality, consider an optimal Min strategy $\sigma_\phi$ in $\game_\phi$, and take $\pi \const \sigma_\phi$.
By applying Lemma~\ref{lem:optimal_strat} in $\game_\phi$ we now get
\[
\begin{array}{lcl}
\summ(\pi_{<k}) & = & \summ_\phi(\pi_{<k}) - \phi(v_k) + \phi(v_0)\\
& \leq & \Enp_\phi(v_0) - \underbrace{\Enp_\phi(v_k)}_{\geq 0} - \underbrace{\phi(v_k)}_{\geq 0} + \phi(v_0) \ \ \leq \ \ \Enp_\phi(v_0) + \phi(v_0),
\end{array}
\] 
and again the wanted result follows by taking a supremum.
\end{proof}

\section{Proof of Lemma~\ref{lem:evolution_n_p}}\label{app:evolution_n_p}

This small appendix is devoted to a proof of Lemma~\ref{lem:evolution_n_p}.
It states that the sets of $N$ and $P$ of negative and positive vertices can only decrease from an iteration to the next.

\begin{proof}[Proof of Lemma~\ref{lem:evolution_n_p}]
We let $\ext(v),\ext'(v) \in \Z$ denote the extremal weights of $v$ in $\game$ and $\game'$.
We prove that
\[
\begin{array}{ll}
\forall v \in N^*, \qquad& \ext(v) \leq \ext'(v) \leq 0 \\
\forall v \in P^*, \qquad& \ext(v) \geq \ext'(v) \geq 0.
\end{array}
\]
This implies the lemma: if $\ext'(v) < 0$ then necessarily $\ext(v)<0$ so $v$, therefore $N' \subseteq N$; likewise, $P' \subseteq P$.
We only prove the first line since the second follows by symmetry.

For the left inequality it suffices to observe that the weight of edges outgoing from $N^*$ can only increase: edges pointing to $N^*$ keep the same weight while those pointing towards $P^*$ are increased by $\delta$.
For the inequality on the right we make a quick case disjunction.
\begin{itemize}
\item Let $v \in N^* \cap \VMax$.
Then all extremal edges are non-positive, and those which point towards $P^*$ are even $\leq -\delta$ by definition of $\delta$ hence they all remain non-positive.
\item Let $v \in N^* \cap \VMin$.
The result follows directly if $v$ has a non-positive outgoing edge towards $N^*$ since it is left unchanged.
Otherwise $v \in SN$ hence $v$ has an outgoing edge of weight $\leq - \delta$ which therefore remains non-positive. \qedhere
\end{itemize} 
\end{proof}

\section{Proof of Theorem~\ref{thm:strict_increase}}\label{app:strict_increase}

This appendix is devoted to the proof of Theorem~\ref{thm:strict_increase}, which is the most technical one in the paper.

Towards proving the theorem, we define two relevant indices $i_D$ and $i_A$ which we respectively call the departure index and arrival index.
As their names suggest the first is relevant to vertices which leave $N^*$, that is, those in $N^* \cap P^{'*}$, while the second is relevant to arriving vertices, those in $P^* \cap N^{'*}$.
We let
\[
\begin{array}{lcl}
i_D & = & \min\{\alt(v) \mid v \in \VMax \cap N^* \text{ and } v \in P^{'*}\}, \\
i_A &=& \min\{\alt(e_0\pi_1) \mid e_0 \in E \cap [(P^* \cap \VMin) \times  N^*], w(e_0)=\delta, \text{ and } \pi_1 \text{ is optimal from } v_1 \text{ in } \game\}
\end{array}
\]
Note that if finite, $i_D$ is odd and $i_A$ is even.
We now provide a sequence of incremental results that eventually give the theorem.

\begin{lemma}
Assume that $N'=N$ and $P'=P$.
\begin{enumerate}[(i)]
\item \label{it1} For all $v \in N^*$, if $\alt(v) < i_D$ then $v \in N^{'*}$ and $\alt'(v) \leq \alt(v)$.
\item \label{it2} Any path $\pi'$ which is optimal in $\game'$ but is not zero in $\game$ satisfies $\alt'(\pi') \geq i_A$.
\item \label{it3} For all $v \in P^* \cap N^{'*}$ it holds that $\alt'(v) \geq i_A$.
\item \label{it4} For all $i \leq \min(i_D -1,i_A)$ we have $A_i \subseteq A'_i$ and for all $i \leq \min(i_D,i_A -1)$ we have $A'_i \subseteq A_i$.
\item \label{it5} If $i_A < i_D$ then $|A'_{i_A}|>|A_{i_A}|$. 
\item \label{it6} We have $\delta_{E}^- < - \delta$ and likewise $\delta_{A}^+ > \delta$.
\item \label{it7} If $i_A = \infty$ then $i_D < \infty$.
\item \label{it8} Theorem~\ref{thm:strict_increase} holds.
\end{enumerate}
\end{lemma}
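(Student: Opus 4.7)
My plan is to prove items (i)--(viii) in the order given, with item (vi) promoted slightly so that it can be invoked inside the proof of item (ii).

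Item (i) I would show by induction on $\alt(v)$. The base case $\alt(v) = 0$ gives $v \in N = N'$ by hypothesis. For the inductive step, take an optimal zero path $v = v_0 \to v_1 \to \cdots \to v_k$ in $\game$; it stays in $N^*$, so all its weights are unchanged in $\game'$. Each $v_j$ on the path satisfies $\alt(v_j) \leq \alt(v) < i_D$; the Max vertices are then in $N^{'*}$ by definition of $i_D$, while the Min vertices have a zero edge to a successor in $N^{'*}$ (by induction) and are thus in $N^{'*}$ as well. This certifies $\alt'(v) \leq \alt(v)$. Item (vi) is then dispatched directly from the assumption $N = N'$, $P = P'$: if $\delta_{\Max}^- = -\delta$, the Max endpoint of the witnessing edge would have its extremal weight climb from negative to $\geq 0$ under the potential transformation and thus leave $N$, contradicting $N = N'$; the inequality for $\delta_{\Min}^+$ is symmetric.

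For item (ii), I examine an optimal path $\pi'$ in $\game'$ that is not zero in $\game$. Every edge of $\pi'$ is zero in $\game'$, so any edge nonzero in $\game$ must cross the $N^*/P^*$ boundary with weight $\pm\delta$. By (vi) there is no $-\delta$ edge from $\VMax \cap N^*$ into $P^*$, and a parallel application of (vi) rules out the corresponding case for $\VMin \cap N^*$ (such a vertex would lie in $SN$ with its extremal weight becoming nonnegative, again contradicting $N = N'$). Thus the first boundary-crossing edge on $\pi'$ must be from $\VMin \cap P^*$ into $N^*$ with weight $\delta$. Setting $e_0$ to be this edge, the remaining tail $\pi_1$ lies in $N^*$, is zero in $\game$, ends in $N$ and is optimal from its starting point in $\game$ by minimality of $\pi'$ -- exactly the form in the definition of $i_A$. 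Item (iii) is then immediate: a vertex $v \in P^* \cap N^{'*}$ admits an optimal $\game'$-path that cannot be zero in $\game$ (it has to leave $P^*$), so (ii) applies and yields $\alt'(v) \geq i_A$.

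Items (iv) and (v) combine (i) and (iii). For (iv), if $i \leq \min(i_D - 1, i_A)$ and $v \in A_i$ then $\alt(v) = i < i_D$ gives $v \in N^{'*}$ with $\alt'(v) \leq i$ by (i), and any strictly smaller $\alt'$ witness would force $v$ to arrive via a non-$\game$-zero crossing with alternation below $i_A$, contradicting (iii); the reverse containment is symmetric. For (v), when $i_A < i_D$, the target $v_1$ of the edge $e_0$ defining $i_A$ is a new element of $A'_{i_A}$ (its source in $P^*$ confirms it was not in $A_{i_A}$), and $A_{i_A} \subseteq A'_{i_A}$ by (iv), so strict growth follows. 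Item (vii) is a short symmetry observation: if $i_A = \infty$, no arrival into $N^{'*}$ occurs, yet $\delta < \infty$ means the partition must change, which forces a departure and hence $i_D < \infty$.

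Finally, item (viii) assembles everything. By (iv), the coordinates of $(-|A_1|, |A_2|, -|A_3|, \dots)$ agree with the primed sequence up to position $\min(i_D, i_A) - 1$. Noting that $i_A$ is even and $i_D$ is odd when finite (by the parity of the starting vertex in each definition), we get at position $i_A$ (even, positive coordinate $+|A_{i_A}|$) strict increase by (v) in the case $i_A < i_D$, and at position $i_D$ (odd, coordinate $-|A_{i_D}|$) strict increase (i.e. $|A_{i_D}|$ strictly decreases) by the symmetric version of (v) in the case $i_D < i_A$; (vii) guarantees that at least one of the two is finite, so a differing coordinate always exists. The main obstacle is item (ii): tracking the interaction between the zero-in-$\game'$ boundary-crossings and the alternation structure of optimal paths requires first ruling out the ``bad'' crossings via (vi) and then a minimality argument on $\pi'$ to recognize the tail $\pi_1$ as an optimal $\game$-path.
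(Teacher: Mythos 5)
There is a genuine gap, centred on your items (vi) and (ii). Item (vi) concerns the quantities $\delta_{\Min}^-$ (the one defined via $SN$) and $\delta_{\Max}^+$ (via $SP$), not $\delta_{\Max}^-$ and $\delta_{\Min}^+$ as you assume; the inequalities you prove are in fact false under the hypotheses. A vertex $v\in\VMax\cap N^*$ carrying an edge of weight exactly $-\delta$ into $P^*$ need not lie in $N$: if it also has a zero edge into $N^*$ then $\ext(v)=\ext'(v)=0$, so $v$ stays in $Z$ and $N=N'$, $P=P'$ are untouched. Hence $\delta_{\Max}^-=-\delta$ is perfectly consistent with the hypotheses (item (vii) of the paper relies on precisely this case being possible), and similarly a vertex of $\VMin\cap N^*\setminus SN$ may have a $-\delta$ edge into $P^*$ without its extremal weight becoming non-negative. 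Your proof of (ii) collapses here: you cannot rule out that the optimal path $\pi'$ in $\game'$ crosses from $N^*$ into $P^*$ (such a crossing edge has weight $-\delta$ in $\game$ and $0$ in $\game'$), so the first boundary-crossing edge need not go from $P^*$ into $N^*$; and even when it does, the tail after it need not remain in $N^*$ nor be zero in $\game$, so it is not comparable with the optimal paths appearing in the definition of $i_A$.

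The correct move is to take the \emph{last} index $i_0$ at which $\pi'$ visits $P^*$: since $\pi'$ ends in $N'=N\subseteq N^*$, the edge $v_{i_0}v_{i_0+1}$ crosses from $P^*$ to $N^*$ and has weight $\delta>0$ in $\game$; its source must be a Min vertex (a Max vertex with a positive edge lies in $P=P'$, hence in $P^{'*}$, contradicting that $\pi'$ stays in $N^{'*}$); and the tail from $v_{i_0+1}$ onward stays in $N^*$ and is zero in $\game$, which is exactly what is needed to compare with the definition of $i_A$. Secondary issues: your item (vii) asserts without justification that $\delta<\infty$ forces the partition $N^*,P^*$ to change and that any change yields a departing \emph{Max} vertex (the paper instead uses (vi) to deduce $\delta=-\delta_{\Max}^-$, shows $P^{'*}\supseteq P^*$ by an attractor induction, and exhibits a departing Max vertex explicitly); in (v) the new element of $A'_{i_A}$ is the source $v_0\in P^*\cap\VMin$ of the edge $e_0$, not its target; and (iv) hides the double induction needed to show that an optimal $\game'$-path from a vertex of small $\alt'$ is zero in $\game$. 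Items (i) and (iii) otherwise follow the paper's route.
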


Items (\ref{it1}), (\ref{it2}) and (\ref{it3}) build towards item (\ref{it4}) which is the main intermediate result.
Items (\ref{it5}) and (\ref{it7}) have a similar proof although (\ref{it7}) also relies on (\ref{it6}), and build up to the conclusion.

\begin{proof}
\begin{enumerate}[(i)]
\item We prove the claim by induction on the length $k$ of the smallest optimal path $\pi=v_0 \re \dots \re v_k$ from $v_0=v$.
Note that $\pi$ is zero in $\game$ and remains in $N^*$ hence it is also zero in $\game'$.
If $\pi$ has length zero then $v \in N = N'$ hence $v \in N^{'*}$ and $\alt'(v)=0 \leq \alt(v)$, so we now assume $k > 0$ and that the result is known for vertices with an optimal path of length $\leq k -1$.

It holds by induction that $v_1, v_2, \dots, v_{k} \in N^{'*}$ hence it suffices to prove that $v \in N^{'*}$ since it implies that $\pi$ is a zero path in $\game'$ which remains in $N^{'*}$.
If $v \in \VMin$ then $v$ has a zero edge in $\game'$ towards $v' \in N^{'*}$ hence $v \in N^{'*}$.
Otherwise it holds that $v \in N^{'*}$ because $v \in P^{'*}$ would contradict that $\alt(v) < i_D$.

\item Let $\pi':v_0 \re \dots \re v_{k}$ be such a path.
It cannot be that $\pi'$ is included in $N^*$ otherwise it would be zero in $\game$, and we let $i_0$ be the largest index such that $v_{i_0} \in P^*$.
Since $w'(v_{i_0}v_{i_0+1})=0$ we have $w(v_{i_0}v_{i_0+1})=\delta >0$ hence it must be that $v_{i_0} \in \VMin$ otherwise we would have $v_{i_0} \in P=P'$ which contradicts that $v_{i_0} \in N^{'*}$.
We now let $\pi$ be an optimal path from $v_{i_0+1}$.
Then we have $\alt(\pi') \geq \alt((v_{i_0} v_{i_0+1}) \pi) \geq i_A$.

\item Let $v \in P^* \cap N^{'*}$ and let $\pi':v_0 \re \dots \re v_{k}$ be an optimal path from $v=v_0$ in $\game'$.
We assume for contradiction that $\alt'(v) < i_A$, which thanks to the previous item implies that $\pi'$ is zero in $\game$.
Since $v_{k} \in N' = N$ and $v=v_0 \in P^*$ there is an index $i_0$ such that $v_{i_0} \in P^* $ and $v_{i_0+1} \in N^*$.
This contradicts the fact that $w(v_{i_0}v_{i_0+1})=w'(v_{i_0}v_{i_0+1})=0$.

\item We prove the two results together by induction on $i$.
For $i=0$ we have $A_0=N=N'=A'_0$ hence we let $i \geq 1$ and assume that both results hence the equality are known for smaller values.

By item (\ref{it1}) if $i < i_D$ and $v \in A_i$ then $v \in N^{'*}$ and $\alt'(v) \leq i$, but our induction hypothesis tells us that $\alt'(v)$ cannot be $<i$ hence $A_i \subseteq A'_{i}$.

Conversely let $v \in A'_i$, assume $i < i_A$, let $\pi:v_0 \re \dots \re v_k$ be an optimal path from $v_0=v$ in $\game'$, and let $j_0>0$ be the smallest index such that $v_{j_0} \notin A'_i$.
We assume that $\pi$ is chosen such that $v_{j_0}$ is minimal, and prove the result by an inner induction on $j_0$.
Since $\alt'(v) = i < i_A$ we know by item (\ref{it3}) that $v \in N^*$.

If $j_0=1$, that is if $v_1 \in A'_{i-1}$, then thanks to the (outer) induction hypothesis for all $j \geq 1$ we have $v_j \in A'_{k_j} = A_{k_j}$ for some $k_j < i$, hence for all $j \geq 0$ we have $v_j \in N^*$.
Hence $\pi'$ remains in $N^*$ and is zero in $\game'$ thus it is also zero in $\game$ and $\alt(v) \leq \alt(\pi)=i$.
We conclude thanks to the (outer) induction that $\alt(v)=i$.

If $j_0 \geq 2$ then the inner induction hypothesis gives $v_j \in N^*$ for $j \in [1,j_0]$ and the outer induction hypothesis gives $v_j \in N^*$ for $j \in [j_0+1,k]$, and we repeat the same argument.

\item Assume that $i_A < i_D$.
By item (\ref{it4}) it holds that $A_{i_A} \subseteq A'_{i_A}$ hence it suffices to find $v_0 \in A'_{i_A} \setminus A_{i_A}$ and we take $v_0$ given by the definition of $i_A$:
$v_0 \in P^* \cap \VMin$ is such that there is $v_1 \in N^*$ with $v_0v_1 \in E$, $w(v_0v_1)=\delta$ (which implies $w'(v_0v_1)=0$) and $\alt(v_1) \leq i_A$.

Again by (\ref{it4}) it holds that $v_1 \in N^{'*}$ and $\alt'(v_1) \leq \alt(v_1)$, thus $v_0 \in \VMin$ has a zero edge in $\game'$ towards a vertex of $N^{'*}$ and therefore $v_0 \in N^{'*}$.
Now $\alt'(v_0) \leq \alt((v_0 v_1) \pi'_1)$, where $\pi'_1$ is an optimal path from $v_1$ in $\game'$ and hence $\alt'(v_0) \leq i_A$.
Yet again thanks to (\ref{it4}) it cannot be that $\alt'(v_0)<i_A$ since $A'_{i} \subseteq A_i$ for $i < i_A$ and $v_0 \in P^*$, therefore we conclude that $v_0 \in A'_i$.

\item Assume for contradiction that $\delta_{N,2} = - \delta$.
Then there is $v \in SN$ such that $\ext(v)=-\delta$, hence $\ext'(v)=0$ which contradicts $N'=N$.
The proof of the second statement is symmetric.

\item If $i_A=\infty$ then there is no edge with weight $\delta$ in $\game$ from $P^* \cap \VMin$ to $N^*$ hence $\delta_{P,1} > \delta$ therefore it must be by item (\ref{it6}) that $\delta = - \delta_{A}^-$.
We let $e_0=v_0v_1$ be an edge with weight $-\delta$ from $v_0 \in \VMax \cap N^*$ to $v_1 \in P^*$.
We claim that $P^{'*} \supseteq P^*$ which proves the result since then $v_0 \in \VMax$ has an edge $e_0$ which is zero (hence non-negative) in $\game'$ towards $P^{'*}$, hence $v_0 \in P^{'*}$ and $i_D \leq \alt(v_0)$.

This follows from a quick induction over attractor-layers towards $P=P'$ over zero edges in $\game$:
a vertex $v \in \VMax \cap (P^* \setminus P)$ has a zero edge, which remains zero, in $\game$ towards a vertex in the previous layer, and by assumption vertices $v \in \VMin \cap (P^* \setminus P)$ have all their edges towards $N^*$ which are $\geq \delta_{P,1}> \delta$ hence remain positive.

\item By item (\ref{it7}) $m=\min(i_D,i_A)$ is finite, and by item (\ref{it4}) we have $A_i=A'_i$ for all $i < m$.
If $m = i_A \leq i_D-1$ then moreover $A_m \subseteq A'_m$ and the inclusion is strict by item (\ref{it5}), which concludes.
Otherwise $m = i_D \leq i_A-1$ hence $A_m \supseteq A'_m$ and the inclusion is strict by definition of $i_D$. \qedhere
\end{enumerate}
\end{proof}

Even broken in elementary steps the proof above remains very tedious, we are not aware unfortunately of simplifications that could be made.

\end{document}